\def\qed{\rule{2mm}{2mm}}
\def\independent{\perp \!\!\! \perp}
\newtheorem{theorem}{Theorem}[section]
\theoremstyle{definition}
\newtheorem{remark}{Remark}[section]
\newtheorem{assumption}{Assumption}[section]
\newcommand{\mycomment}[1]{}
\begin{document}

\author{
Jizhou Liu \\
Booth School of Business\\
University of Chicago\\
\url{jliu32@chicagobooth.edu}
\and
Eric J. Tchetgen Tchetgen \\
The Wharton School \\
University of Pennsylvania \\
\url{ett@wharton.upenn.edu}
\and 
Carlos Varjão \\
Senior Economist \\
Amazon.com, Inc \\
\url{varjaoc@amazon.com}
}

\bigskip

\title{Proximal Causal Inference for Synthetic Control with Surrogates \thanks{This paper has been generously funded by Amazon.com, Inc., through a graduate student fellowship gift awarded to Jizhou Liu.}}

\maketitle

\vspace{-0.3in}

\begin{spacing}{1.2}
\begin{abstract}
The synthetic control method (SCM) has become a popular tool for estimating causal effects in policy evaluation, where a single treated unit is observed, and a heterogeneous set of untreated units with pre- and post-policy change data are also observed. However, the synthetic control method faces challenges in accurately predicting post-intervention potential outcome had, contrary to fact, the treatment been withheld, when the pre-intervention period is short or the post-intervention period is long. To address these issues, we propose a novel method that leverages post-intervention information, specifically time-varying correlates of the causal effect called ``surrogates'', within the synthetic control framework. We establish conditions for identifying model parameters using the proximal inference framework and apply the generalized method of moments (GMM) approach for estimation and inference about the average treatment effect on the treated (ATT). Interestingly, we uncover specific conditions under which exclusively using post-intervention data suffices for estimation within our framework. Moreover, we explore several extensions, including covariates adjustment, relaxing linearity assumptions through non-parametric identification, and incorporating so-called ``contaminated'' surrogates, which do not exactly satisfy conditions to be valid surrogates but nevertheless can be incorporated via a simple modification of the proposed approach. Through a simulation study, we demonstrate that our method can outperform other synthetic control methods in estimating both short-term and long-term effects, yielding more accurate inferences. In an empirical application examining the Panic of 1907, one of the worst financial crises in U.S. history, we confirm the practical relevance of our theoretical results.
\end{abstract}
\end{spacing}
 
\noindent KEYWORDS: Panel data; proximal causal inference; synthetic control; time series

\noindent JEL classification codes: C12, C14
\hypersetup{pageanchor=false}
\thispagestyle{empty} 
\newpage
\hypersetup{pageanchor=true}
\setcounter{page}{1}

\section{Introduction}
The synthetic control method (SCM), first introduced by \cite{abadie2003}, has become a widely used baseline method in empirical research across economics and social sciences. It is designed to estimate the impact of an intervention experienced by a single unit by predicting the counterfactual scenario of what would have occurred if the intervention had not taken place, utilizing outcomes from unaffected units. As formalized in \cite{abadie2010}, the approach involves matching the treated unit to a weighted average of control units, referred to as "synthetic control," such that the resulting synthetic control matches the pre-intervention outcome trajectory over time for the treated units. Despite its significance in policy evaluation literature, the SCM may struggle to accurately reproduce the trajectory of the outcome for the treated unit when faced with a limited number of pre-intervention periods or structural breaks during a long pre- and post-intervention period (see \cite{abadie2021}). In response to these challenges, this paper proposes a novel method that utilizes post-intervention information to enhance the precision of the estimated causal effects.

Our approach leverages time-varying correlates of the causal effect that can be integrated into the synthetic control framework. Contrary to standard SCM, which requires donor units to remain unaffected by the intervention, our proposed model repurposes correlates as ideal surrogates of causal effects to the extent that they are predictive of treatment effects. This innovative method aims to enhance the performance of SCM in various situations, including estimating long-term effects and handling short pre-intervention periods, by incorporating additional post-intervention information, specifically surrogates.

The synthetic control literature has primarily focused on frameworks that assume either a perfect match of the synthetic control to the treated unit in the pre-treatment period (see for example \cite{abadie2010}) or a perfect match of the underlying unobserved factor loadings (see for example \cite{Ferman2019, Amjad2018, Powell2018, Ferman2021}). However, it is rarely the case that the perfect match assumption holds exactly in the observed data. Furthermore, models that assume perfect match of underlying factor loadings may be subject to bias due to the measurement error that arises in the synthetic control regression of the treated unit on donor outcomes. To address this issue, some researchers have proposed using lagged control unit outcomes as instrumental variables \cite{Ferman2021} or de-noising the data through singular value thresholding \cite{Amjad2018}, under an assumption of independent idiosyncratic error terms across units and time. In contrast, our paper employs the proximal causal inference framework proposed by \cite{PI2021}, which offers several advantages, such as effectively addressing the measurement error issue and providing convenient access to off-the-shelf Generalized Method of Moments (GMM) approaches for estimation and inference of average treatment effects on the treated (ATT).

We first establish the conditions required for identifying model parameters within the proximal inference framework. We then employ the generalized method of moments (GMM) approach to estimate and conduct statistical inference on the average treatment effect on the treated (ATT). In the pre-intervention period, we rely on the donor series to learn the ``synthetic control'', conceptually  similar to the classic synthetic control method (see \cite{PI2021}). In the post-intervention period, our method augments the estimation of synthetic weights and the causal effect series with surrogates. Our identification and estimation strategy is based on a joint set of moment equations that utilize data from both periods. Under this framework, we highlight the possibility of using only post-treatment data for estimation. This is potentially feasible as outcomes of treatment comprise of two components: one influenced by the latent factors affecting donors, and the other by those affecting surrogates. Because our method utilizes observations for both surrogates and donors, it allows for simultaneous learning of these components. 

Next, we explore several extensions to our proposed framework. First, we incorporate adjustments for measured covariates that are causally unaffected by the treatment. Second, we aim to relax the linear structure in our model, outlining sufficient conditions for nonparametric identification, which can guide practitioners in implementing nonparametric methods. Finally, we consider so-called "contaminated" surrogates, which, in addition to being time-varying correlates of the causal effect, also have components correlated with the outcomes of donors. Good surrogate candidates in practice can be defined as auxiliary outcomes of the target unit or outcomes of other units potentially causally affected by the intervention.

Finally, we compare our proposed synthetic control methodology to various methods proposed in previous literature, including conventional synthetic control methods and proximal causal inference, through a simulation study and empirical application. Our simulation study provides evidence that our method can outperform methods, with lower Mean Squared Error (MSE) and good coverage rates under a relative broad range of model specifications. We demonstrate that our method produces estimated causal effects that are more closely aligned with ground truth than the classic synthetic control regression method in the post-intervention period, particularly when the post-treatment period is long. Our study also evaluates the potential for applying proximal causal inference to estimating causal effects with post-intervention data only. In our empirical application, we explore the Panic of 1907, one of the most severe financial crises in US history, and assess its impact on NYC trusts using a high-frequency dataset, as analyzed by \cite{Fohlin2021}. We apply and compare various synthetic control methods to the stock prices of trust companies, demonstrating that the proximal causal inference methods, especially our proposed approach, produce more credible point estimates and inferences. Moreover, we highlight the potential of exclusively using post-intervention data for estimation within our framework.

Our method contributes to the expanding literature on methodological advancements that enhance traditional synthetic control methods by incorporating supplementary estimation techniques or additional information. \cite{feller2021} propose an augmented synthetic control method that employs ridge regression to rectify disparities in pre-treatment outcomes between treated units and un-augmented synthetic control estimators. \cite{Kellogg2021} advocate for the integration of synthetic control estimators and nearest-neighbor matching estimators in a weighted average to mitigate interpolation and extrapolation biases. A number of prior studies have utilized penalization schemes to regularize synthetic control weights (for example, \cite{Abadie2021jasa}, \cite{Doudchenko2016}, \cite{Chernozhukov2021}). Our research is also closely connected to recent advancements in the field of proximal causal inference (for instance, \cite{miao2018, Shi2020,Cui2023,miao2020}).

The rest of the paper is organized as follows. In Section \ref{sec:setup} we describe our setup and notation. Section \ref{sec:id-est} presents the main results for identification and estimation. In section \ref{sec:extensions}, we discuss a few extensions to the original framework. In Section \ref{sec:simulation}, we examine the finite sample behavior of various synthetic control methods through simulations. In Section \ref{sec:empirical} we illustrate our proposed method in an empirical application based on the data from \cite{Fohlin2021}. Finally, we close with a brief discussion in Section \ref{sec:conclusion}.

\section{Setup and Notation}\label{sec:setup}
Let $Y_{t}$ denote the observed outcome for the target unit, $W_{t}$ denote a $(N\times 1)$ vector of the observed outcome for donors, and $X_{t}$ denote a $(K\times 1)$ vector of observed surrogates for time periods $t=1,\dots, T$. Denote by $Y_{t}(0)$  the outcome of the target unit at time period $t$ that would be observed in the absence of the intervention, and by $Y_{t}(1)$ he outcome of the target unit at time period $t$ that would be observed in the presence of the intervention. The (observed) outcome and potential outcomes are related to treatment assignment by the relationship
\begin{equation}\label{eqn:outcome}
    Y_t = \mathbf{1}\left( t > T_0 \right) Y_t(1) + \mathbf{1}\left( t \leq T_0 \right) Y_t(0)~,
\end{equation}
where $1 < T_0 < T$ denotes the time period when the target unit receives treatment. In this paper, we focus on estimation and inference on the so-called average treatment effect on the treated (ATT), which is given by:
\begin{equation}\label{eqn:estimand}
    \tau = E\left[\frac{1}{T-T_0} \sum_{t=T_0}^{T} Y_t(1) - Y_t(0) \right]~.
\end{equation}

Suppose that $Y_{t}(0)$ and $W_t$ follow an interactive fixed effect model (IFEM).
\begin{equation}\label{eqn:ifem-donor}
\begin{split}
    &Y_t(0) = \lambda_t^\prime \beta + \epsilon_{Y,t} \\
    &W_t^\prime = \lambda_t^\prime \Gamma + \epsilon_{W,t}^\prime~,
\end{split}
\end{equation}
where $\lambda_{t}$ is a $(F\times 1)$ vector of common factors, for $t=1,\dots,T$, $\beta$ is an $(F\times 1)$ vector of factor loadings, $\Gamma$ is an $(F\times N)$ matrix of factor loadings, $\epsilon_{Y,t}$ and $\epsilon_{W,t}$ are the corresponding error terms. Similarly, we consider a synthetic surrogate model in post intervention period based on a similar factor model:
\begin{equation}\label{eqn:ifem-surrogate}
\begin{split}
    &Y_t(1) - Y_t(0) = \rho_t^\prime \theta + \delta_t \\
    & X_t^\prime = \rho_t^\prime \Phi + \epsilon^\prime_{X,t}~,
\end{split}
\end{equation}
where $\rho_{t}$ is a $(K\times 1)$ vector of common factors, for $t=1,\dots,T$, and $\theta$ is an $(K\times 1)$ vector of factor loadings, $X_t$ is an $(H\times 1)$ vector of surrogates, $\Phi$ is a $(K\times H)$ matrix of factor loadings, $\delta_t$ and $\epsilon_{X,t}$ are the corresponding error terms.

The IFEM model in Equation (\ref{eqn:ifem-donor}) is commonly assumed in the synthetic control literature \citep[see, for example,][]{xu2017, Ferman2021, PI2021}. In this paper, we consider an additional surrogate model, as in Equation (\ref{eqn:ifem-surrogate}), which is analogous to Equation (\ref{eqn:ifem-donor}). The motivation is to leverage extra post-intervention information sometimes available in the dataset to capture post-intervention period variation in causal effects, using surrogates that are highly correlated with post-intervention outcomes. This property is manifested in Equation (\ref{eqn:ifem-surrogate}), where both surrogates $X_t$ and causal effects $Y_t(1)-Y_t(0)$ are driven by the same set of latent factors, as assumed in the classical synthetic control approach in Equation (\ref{eqn:ifem-donor}). To identify and estimate parameters in the model, we need a few more assumptions. We begin by making the following assumptions on error terms.
\begin{assumption}[Conditional independent errors] \label{as:error-term}
The error terms are conditional mean independent and independent of each other. $E\left(\epsilon_{Y,t} \mid \lambda_t \right) = E\left(\epsilon_{W,t} \mid \lambda_t \right) = E\left(\delta_t \mid \rho_t \right) =  E\left(\epsilon_{X,t} \mid \rho_t \right) = 0$ for $t=1,\dots,T$.
\end{assumption}
Further, we make the following assumption that is key to identification of a synthetic control and the time series of post-intervention causal effects:
\begin{assumption}[Existence of synthetic control and surrogacy]\label{as:existence-of-weight}
There exist a $(N\times 1)$ vector of weights $\alpha$ such that $\Gamma \alpha = \beta$, and a $(H\times 1)$ vector of weights $\gamma$ such that $\Phi \gamma = \theta$.
\end{assumption}
\begin{remark}\label{remark:exist}
    A necessary condition for the existence of such a vector of weights $\alpha$ (or $\gamma$) is that the number of donors exceeds the number of latent factors $\lambda_t$ (or the number of surrogates exceeds the number of latent factors $\rho_t$), i.e. $N \geq F$ (or $H\geq K$).
\end{remark}
Next, we make the following assumption to identify synthetic weights $\alpha$ and coefficients of the surrogate $\gamma$, which can later be used for identifying causal effects:
\begin{assumption}[Existence of proxy]\label{as:existence-of-proxy}
We observe proxies $\{Z_{0,t}, Z_{1,t}\}$ such that
\begin{align*}
    Z_{0,t} &\independent \{Y_i(0), W_t\} \mid \lambda_t \text{ for any } t \leq T_0 ~, \\
    \{Z_{0,t}, Z_{1,t}\} &\independent \{Y_i(1), W_t, X_t\} \mid \lambda_t, \rho_t \text{ for any } t > T_0~.
\end{align*}
\end{assumption}
Assumption \ref{as:existence-of-proxy} implies that $\lambda_t$ encompasses the entire set of confounders between $Z_{0,t}$ and $\{Y_i(0), W_t\}$, while $(\lambda_t,\rho_t)$ encompasses the entire set of confounders between $\{Z_{0,t}, Z_{1,t}\}$ and $\{Y_i(1), W_t, X_t\}$. As argued in \cite{PI2021}, $Z_{0,t}$, which denotes proxy variables for outcomes of donors, is frequently available in synthetic control settings. For instance, a suitable choice of proxy variables might be the outcome of units that are potential donor candidates but were excluded from the donor pool. Similarly, proxy variables for surrogates, denoted by $Z_{1,t}$, could be potentially unused surrogate candidates.

In order to ensure unique identification of the synthetic control and surrogate weights, we introduce the following completeness condition.
\begin{assumption}[Completeness] \label{as:completeness}
    For any square integrable function $g$, if
    \begin{equation*}
        E[g(W_t) \mid Z_{0,t}=z_{0,t}, t \leq T_0] = 0\\
    \end{equation*}
    for all $z_{0,t}$, then $g(W_t)=0$ for any $t \leq T_0$.
    For any square integrable function $h$, if
    \begin{align*}
        E[h(W_t, X_t) \mid Z_{0,t}=z_{0,t}, Z_{1,t}=z_{1,t}, t > T_0] = 0
    \end{align*}
    for all $z_{0,t}, z_{1,t}$, then $h(W_t, X_t)=0$ for any $t > T_0$.
\end{assumption}
Assumption \ref{as:completeness} is formally known as completeness condition which characterizes the informativeness of $Z_{0,t}$ about $W_{t}$ (similarly for $\{Z_{0,t}, Z_{1,t}\}$ and $\{W_t, X_t\}$ in the post-intervention period), in the sense that any infinitesimal variation in $W_t$ is captured by variation in $Z_{0,t}$ such that no information has been lost through projection of $W_t$ on $Z_{0,t}$ in the pre-treatment period. 

Building on Assumptions \ref{as:existence-of-proxy}-\ref{as:completeness}, we present a practical approach for identifying and estimating causal effects by finding all necessary variables (donors, surrogates, and proxies). The initial step involves identifying as many appropriate donors and surrogates as possible to ensure $N > F$ and $H > K$. To find suitable donors, we adhere to the standard approach in synthetic control literature. For instance, in the German reunification example, \cite{abadie2021} considered a set of industrialized countries as potential donors for West Germany. Regarding surrogates, strong candidates should be highly predictive of magnitude of the causal effect for the target unit, which may include covariates that potentially mediate the causal effect of interest for the target unit in the post-intervention period and outcomes or covariates of other units affected by the treatment. Subsequently, we select $N_0$ donors to construct synthetic controls and $H_0$ surrogates to predict causal effects. The remaining $N - N_0$ donors and $H - H_0$ surrogates can serve as $Z_{0,t}$ and $Z_{1,t}$.

\section{Identification and Estimation}\label{sec:id-est}
In this section, we present our main results on identification and estimation. We apply the off-the-shelf GMM approach for identifying model parameters and estimating the average treatment effect on the treated (ATT). To begin with, under assumptions provided in the previous section, we have the following identification results:
\begin{theorem}[Identification]\label{thm:identification}
Under Assumption \ref{as:error-term}-\ref{as:existence-of-weight}, we have $E[Y_t(0)] = E[W_t^\prime \alpha]$ and $E[Y(1)] = E[W_t^\prime \alpha + X_t^\prime \gamma]$ for any $1\leq t \leq T$. Under Assumption \ref{as:error-term}-\ref{as:completeness}, $(\alpha, \gamma)$ is uniquely identified by solving 
\begin{align}
    E[Y_t - W_t^\prime \alpha \mid Z_{0,t}, t\leq T_0] &= 0 \text{ for any } t\leq T_0 ~,\\
    E[Y_t - W_t^\prime \alpha - X_t^\prime \gamma \mid Z_{0,t}, Z_{1,t}, t > T_0] &= 0 \text{ for any } t > T_0~.\label{eqn:moment-cond2}
\end{align}
\end{theorem}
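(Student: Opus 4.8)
\noindent\emph{Roadmap and mean identities.} I would split the argument into the two mean identities (under Assumptions~\ref{as:error-term}--\ref{as:existence-of-weight}), existence of a solution $(\alpha,\gamma)$ to the conditional moment equations, and uniqueness of that solution (adding Assumptions~\ref{as:existence-of-proxy}--\ref{as:completeness}). For the identities, start from the donor block \eqref{eqn:ifem-donor}: condition on $\lambda_t$ and use Assumption~\ref{as:error-term} to get $E[Y_t(0)\mid\lambda_t]=\lambda_t'\beta$ and $E[W_t\mid\lambda_t]=\Gamma'\lambda_t$; since $\Gamma\alpha=\beta$ by Assumption~\ref{as:existence-of-weight}, $E[W_t'\alpha\mid\lambda_t]=\lambda_t'\Gamma\alpha=\lambda_t'\beta=E[Y_t(0)\mid\lambda_t]$, and averaging over $\lambda_t$ gives $E[Y_t(0)]=E[W_t'\alpha]$. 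Writing $Y_t(1)=Y_t(0)+\{Y_t(1)-Y_t(0)\}$ and repeating the argument on the surrogate block \eqref{eqn:ifem-surrogate} with $\rho_t$ in place of $\lambda_t$ and $\Phi\gamma=\theta$ gives $E[X_t'\gamma]=E[Y_t(1)-Y_t(0)]$; adding the two identities yields $E[Y_t(1)]=E[W_t'\alpha+X_t'\gamma]$. Both hold for every $1\le t\le T$ because the factor models are posited over the whole sample.

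\noindent\emph{Existence.} The crux is an algebraic cancellation. For $t\le T_0$, $Y_t=Y_t(0)$, so substituting \eqref{eqn:ifem-donor} and using $\Gamma\alpha=\beta$,
\[
Y_t-W_t'\alpha=(\lambda_t'\beta+\epsilon_{Y,t})-(\lambda_t'\Gamma+\epsilon_{W,t}')\alpha=\epsilon_{Y,t}-\epsilon_{W,t}'\alpha;
\]
and for $t>T_0$, $Y_t=Y_t(1)$, so combining \eqref{eqn:ifem-donor}--\eqref{eqn:ifem-surrogate} with $\Gamma\alpha=\beta$ and $\Phi\gamma=\theta$,
\[
Y_t-W_t'\alpha-X_t'\gamma=\epsilon_{Y,t}+\delta_t-\epsilon_{W,t}'\alpha-\epsilon_{X,t}'\gamma.
\]
The observed residuals $Y_t-W_t'\alpha$ and $Y_t-W_t'\alpha-X_t'\gamma$ are functions of $\{Y_t(0),W_t\}$ and $\{Y_t(1),W_t,X_t\}$ respectively, which Assumption~\ref{as:existence-of-proxy} renders conditionally independent of the proxies given the latent factors; hence by the tower property $E[Y_t-W_t'\alpha\mid Z_{0,t}]=E[E[Y_t-W_t'\alpha\mid\lambda_t]\mid Z_{0,t}]$ for $t\le T_0$, and analogously $E[Y_t-W_t'\alpha-X_t'\gamma\mid Z_{0,t},Z_{1,t}]=E[E[Y_t-W_t'\alpha-X_t'\gamma\mid\lambda_t,\rho_t]\mid Z_{0,t},Z_{1,t}]$ for $t>T_0$. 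By the cancellations the inner conditional means equal $E[\epsilon_{Y,t}-\epsilon_{W,t}'\alpha\mid\lambda_t]$ and $E[\epsilon_{Y,t}+\delta_t-\epsilon_{W,t}'\alpha-\epsilon_{X,t}'\gamma\mid\lambda_t,\rho_t]$, which vanish by Assumption~\ref{as:error-term} (in the post-intervention case, understood relative to $(\lambda_t,\rho_t)$ jointly). Thus $(\alpha,\gamma)$ solves the system.

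\noindent\emph{Uniqueness.} Let $(\tilde\alpha,\tilde\gamma)$ be any other solution. Differencing the pre-period equation across the two solutions gives $E[W_t'(\alpha-\tilde\alpha)\mid Z_{0,t}]=0$ for $t\le T_0$, so the first half of Assumption~\ref{as:completeness}, applied to $g(W_t)=W_t'(\alpha-\tilde\alpha)$, forces $W_t'(\alpha-\tilde\alpha)=0$ almost surely. Differencing the post-period equation gives $E[W_t'(\alpha-\tilde\alpha)+X_t'(\gamma-\tilde\gamma)\mid Z_{0,t},Z_{1,t}]=0$ for $t>T_0$, so the second half of Assumption~\ref{as:completeness}, applied to $h(W_t,X_t)=W_t'(\alpha-\tilde\alpha)+X_t'(\gamma-\tilde\gamma)$, forces that linear combination to vanish almost surely; together with the previous display, $X_t'(\gamma-\tilde\gamma)=0$ almost surely. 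Under a mild non-degeneracy condition on the donor and surrogate vectors (no exact collinearity, so that $E[W_tW_t']$ over $t\le T_0$ and $E[X_tX_t']$ over $t>T_0$ are nonsingular) these equalities force $\alpha=\tilde\alpha$ and $\gamma=\tilde\gamma$, completing the proof.

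\noindent\emph{Main obstacle.} The linear algebra is routine; the step that deserves care is the one in the existence part where each residual is identified as a measurable function of exactly the block of variables that Assumption~\ref{as:existence-of-proxy} decouples from the proxies, so that conditioning on the proxies can be absorbed before the conditional mean-zero property of the errors is invoked --- and, in the post-intervention period, this uses Assumption~\ref{as:error-term} relative to the joint latent vector $(\lambda_t,\rho_t)$, which I would make explicit. The only other gap to patch is the passage from ``$g(W_t)=0$ a.s.'' (and its surrogate analogue) to equality of the coefficient vectors, which requires the no-collinearity condition flagged above.
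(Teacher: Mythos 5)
Your proposal is correct and follows essentially the same route as the paper's proof: the same algebraic cancellation of the factor terms via $\Gamma\alpha=\beta$ and $\Phi\gamma=\theta$, the same tower-property argument using Assumption~\ref{as:existence-of-proxy} to pass from conditioning on the latent factors to conditioning on the proxies, and the same differencing-plus-completeness argument for uniqueness. If anything you are slightly more careful than the paper at the last step, where the paper reads Assumption~\ref{as:completeness} as directly delivering $\alpha=\tilde\alpha,\ \gamma=\tilde\gamma$ while you correctly note that completeness only yields $W_t'(\alpha-\tilde\alpha)+X_t'(\gamma-\tilde\gamma)=0$ almost surely and that a non-collinearity condition on $(W_t,X_t)$ is needed to recover equality of the coefficient vectors; likewise your explicit remark that the post-period step needs the error means to vanish conditional on $(\lambda_t,\rho_t)$ jointly is a point the paper uses implicitly.
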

Using the identification results, we formulate the following moment conditions:
\begin{align*}
    E[U_t(\alpha, \gamma)] = E\left[\begin{pmatrix}
    g_0(Z_{0,t}) (Y_t - W_t^\prime \alpha) I\{t \leq T_0\} \\
    g_1(Z_{0,t}, Z_{1,t}) (Y_t - W_t^\prime \alpha - X_t^\prime \gamma) I\{t > T_0\}
    \end{pmatrix}\right] = 0~,
\end{align*}
where $g_0, g_1$ are $d_0$- and $d_1$-dimensional ($d_0 + d_1 \geq N + H$) vector of user-specified functions. The above moment equation motivates the use of the generalized method of moments (GMM) method with $U_t(\alpha, \gamma)$ providing identifying moment restrictions (\cite{Hansen1982}). The GMM solves
\begin{equation*}
    (\hat{\alpha}, \hat{\gamma}) =\arg \min _{\alpha, \gamma} m\left(\alpha, \gamma\right)^{\prime} \Omega m\left(\alpha, \gamma\right),
\end{equation*}
where $m\left(\alpha, \gamma\right) = 1/T \sum_{t=1}^T U_t(\alpha, \gamma)$ is the sample moments evaluated at an arbitrary value $\alpha, \gamma$ and $\Omega$ is a $(d_0 + d_1)\times (d_0 + d_1)$ user-specified symmetric and positive-definite weight matrix.

An interesting observation is that by forming the unconditional moments $E[g_1(Z_{0,t}, Z_{1,t}) (Y_t - W_t^\prime \alpha - X_t^\prime \gamma) I\{t > T_0\}]=0$ using equation (\ref{eqn:moment-cond2}), we can identify $\alpha$ and $\gamma$, as long as the dimension of $g_1$ exceeds the number of donors and surrogates. This finding suggests the possibility of using only post-treatment data for estimation. In Section \ref{sec:simulation}, we validate the practical relevance of this idea through a simulation study.

Then, we make the following assumptions on the error terms:
\begin{assumption}[Stationary error]\label{as:stationary}
    $\epsilon_{Y,t}, \epsilon_{W,t}, \delta_t, \epsilon_{X,t}$ are stationary and weakly dependent processes. 
\end{assumption}
We adopt the weakly dependent assumption as a substitute for the independent and identically distributed errors assumption, while still ensuring the applicability of ergodic laws of large numbers and central limit theorems. The consistency and asymptotic normality of GMM estimators have been well-documented under such conditions, as demonstrated in studies by \cite{Hansen1982}, \cite{NeweyWest1986}, and \cite{Hall2005}. The theorem presented below is derived from Theorem 3.2 in \cite{Hall2005} and its proof can be found in Section B of the Supplementary Material.

\begin{theorem}\label{thm:gmm-coefficients}
    Under Assumptions \ref{as:error-term}-\ref{as:completeness} and \ref{as:stationary}  and regularity conditions \ref{as:app-stationary}-\ref{as:app-Gt} listed in Section \ref{sec:proof2}, as $T\rightarrow \infty$, we have
    \begin{equation*}
        \sqrt{T} (\hat\theta - \theta) \xrightarrow{d} N\left(0, \left(G^{\top} \Omega G\right)^{-1} G^{\top} \Omega S \Omega^{\top} G\left(G^{\top} \Omega^{\top} G\right)^{-1}\right) 
    \end{equation*}
    where $\theta = (\alpha^\prime, \gamma^\prime)^\prime$, $S = \lim_{T\rightarrow \infty} \operatorname{Var}\{\sqrt{T}m(\alpha,\gamma)\}$ is the variance-covariance matrix of the limiting distribution of $\sqrt{T} m(\alpha,\gamma)$ and
    \begin{equation*}
        G = \begin{pmatrix}
            g_0(Z_{0,t}) W_t^\prime I\{t\leq T_0\} & 0 \\
            g_1(Z_{0,t}, Z_{1,t}) W_t^\prime I\{t> T_0\} & g_1(Z_{0,t}, Z_{1,t}) X_t^\prime I\{t>T_0\}
        \end{pmatrix}~.
    \end{equation*}
\end{theorem}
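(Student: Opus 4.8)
The strategy is to recognize that the GMM estimator $(\hat\alpha,\hat\gamma)$ falls within the standard limit theory for GMM with stationary, weakly dependent data (Theorem 3.2 of \cite{Hall2005}), and to exploit the fact that the moment function $U_t(\theta)$, with $\theta=(\alpha^\prime,\gamma^\prime)^\prime$, is affine in $\theta$. Concretely, $U_t(\theta)=a_t-G_t\theta$, where $a_t=\big(g_0(Z_{0,t})Y_tI\{t\le T_0\},\,g_1(Z_{0,t},Z_{1,t})Y_tI\{t>T_0\}\big)^\prime$ and $G_t$ is the matrix-valued function of $(Z_{0,t},Z_{1,t},W_t,X_t)$ displayed in the statement; hence $m(\theta)=\bar a-\widehat G\,\theta$ with $\widehat G=T^{-1}\sum_{t=1}^T G_t$, and there is no linearization remainder. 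Writing $\theta_0$ for the true value, I would first establish consistency $\hat\theta\xrightarrow{p}\theta_0$: by Theorem \ref{thm:identification} and Assumption \ref{as:completeness}, $\theta_0$ is the unique zero of $\theta\mapsto E[U_t(\theta)]$, while the regularity conditions \ref{as:app-stationary}--\ref{as:app-Gt} supply a compact parameter space, a rank condition making $G^\top\Omega G$ nonsingular, and the moment and mixing bounds needed for the ergodic (uniform) law of large numbers; because $U_t$ is affine this uniform LLN reduces to componentwise ergodic LLNs for the stationary sequences $g_0(Z_{0,t})Y_t$, $g_0(Z_{0,t})W_t^\top$, $g_1(Z_{0,t},Z_{1,t})Y_t$, $g_1(Z_{0,t},Z_{1,t})W_t^\top$ and $g_1(Z_{0,t},Z_{1,t})X_t^\top$, summed over the pre- and post-intervention index sets. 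The standard GMM consistency argument then applies.

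Turning to asymptotic normality, the affine structure gives the exact identity $m(\hat\theta)=m(\theta_0)-\widehat G(\hat\theta-\theta_0)$, where $\partial m(\theta)/\partial\theta^\top=-\widehat G$ does not depend on $\theta$ and $\widehat G\xrightarrow{p}G$ by the ergodic LLN. Substituting this into the first-order condition $\widehat G^\top\Omega\,m(\hat\theta)=0$ and solving yields
\[
\sqrt{T}(\hat\theta-\theta_0)=\big(\widehat G^\top\Omega\widehat G\big)^{-1}\widehat G^\top\Omega\,\sqrt{T}\,m(\theta_0).
\]
It then remains to prove $\sqrt{T}\,m(\theta_0)\xrightarrow{d}N(0,S)$ with $S=\lim_{T\to\infty}\operatorname{Var}\{\sqrt{T}\,m(\theta_0)\}$. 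Since $E[U_t(\theta_0)]=0$ by construction of the moment conditions, Assumption \ref{as:stationary} together with the regularity conditions places the relevant partial sums within the scope of a central limit theorem for stationary, weakly dependent sequences; combining the displayed expansion with Slutsky's theorem and the continuous mapping theorem, and using that $\Omega=\Omega^\top$, delivers the stated sandwich limit $N\!\big(0,(G^\top\Omega G)^{-1}G^\top\Omega S\Omega^\top G(G^\top\Omega^\top G)^{-1}\big)$.

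The main obstacle is the two-regime structure induced by the indicators $I\{t\le T_0\}$ and $I\{t>T_0\}$, which makes $\{U_t(\theta_0)\}$ only piecewise stationary. This forces an asymptotic regime in which both $T_0\to\infty$ and $T-T_0\to\infty$ with $T_0/T$ convergent --- otherwise one of the two blocks of moments degenerates --- and requires establishing a central limit theorem for a sum of two diverging partial sums of stationary weakly dependent sequences, verifying in particular that the cross-covariances between the pre- and post-intervention moments contribute negligibly to $S$, a step that draws on the weak-dependence rates in the regularity conditions. Once these conditions, along with the rank condition on $G$, are verified, the conclusion follows by invoking Theorem 3.2 of \cite{Hall2005}; the linearity of the model makes the application essentially immediate, since the remainder term that would otherwise have to be controlled is identically zero.
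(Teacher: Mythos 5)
Your proposal is correct and takes essentially the same route as the paper: both verify that $E[U_t(\theta_0)]=0$ via the identification result, note the moment function is linear in $\theta$, impose stationarity/ergodicity and standard regularity conditions, and then invoke Theorem 3.2 of \cite{Hall2005} to obtain the sandwich limit. The one place you go beyond the paper is in flagging the two-regime issue created by the indicators $I\{t\le T_0\}$ and $I\{t>T_0\}$ --- the paper simply folds $T_0$ into the ``strictly stationary'' observable vector and does not discuss the need for $T_0\to\infty$ and $T-T_0\to\infty$ with $T_0/T$ convergent, so your remark identifies a genuine gap in the paper's own argument rather than in yours.
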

\begin{remark}
    When utilizing the general method of moments framework, over-identification tests on moment restrictions are applicable when the number of moment conditions surpasses the number of parameters to be estimated (i.e., $d_0 + d_1 > N + H$). A testing procedure for this scenario was proposed by \cite{Hansen1982}, which involves calculating the $J$-statistic as follows:
    \begin{equation}\label{eqn:J-test}
        J_T = m(\hat\theta)^\prime \hat\Omega m(\hat\theta).
    \end{equation}
    The validity of this testing procedure is supported by its asymptotic distribution, which is a chi-square distribution with $d_0 + d_1 - N + H$ degrees of freedom, as demonstrated in Lemma 4.2 of \cite{Hansen1982}.
\end{remark}
\begin{remark}
    It is worth noting that Assumption \ref{as:completeness} is a stronger condition than what is strictly necessary for estimating the parameters of interest, $\theta$. In the context of the IFEM model, we only require the invertibility of $G^{\top} \Omega G$. In other words, the rank conditions ensuring the relevance of $Z_{0,t}$ and $W_t$ (as well as ${Z_{0,t}, Z_{1,t}}$ and ${W_t, X_t}$) are sufficient for the validity of the result.
\end{remark}
Theorem \ref{thm:gmm-coefficients} provides the asymptotic results required for estimation and statistical inference of the parameters $(\alpha, \gamma)$ identified by the structural model. The synthetic weights $\alpha$ potentially provide meaningful interpretation for the synthetic control, while $\gamma$ can heuristically be used to gauge the potential predictive ability of surrogates for the target treatment effect over time. We aim to make inferences about the ATT, for which we provide two distinct estimation strategies that we incorporate in the GMM framework. Note that, we have
\begin{equation*}
    \tau = E\left[\frac{1}{T-T_0} \sum_{t=T_0}^{T} Y_t(1) - Y_t(0) \right] = E\left[\frac{1}{T-T_0} \sum_{t=T_0}^{T} \rho_t^\prime\theta \right] = E\left[\frac{1}{T-T_0} \sum_{t=T_0}^{T} X_t^\prime \gamma \right]~,
\end{equation*}
which motivates the following estimator that plugs in the estimated surrogate coefficients and replaces population quantities with their sample counterparts:
\begin{equation*}
    \hat\tau = \frac{1}{T-T_0} \sum_{t=T_0}^{T} X_t^\prime \hat\gamma~.
\end{equation*}
Alternatively, we can use the difference between outcomes of the target unit and the synthetic control to estimate the average treatment effect as follows:
\begin{equation*}
    \tau = E\left[\frac{1}{T-T_0} \sum_{t=T_0}^{T} Y_t(1) - W_t^\prime \alpha \right]~,
\end{equation*}
which motivates the following plug-in estimator:
\begin{equation*}
    \hat\tau = \frac{1}{T-T_0} \sum_{t=T_0}^{T} Y_t -  W_t^\prime \hat\alpha~.
\end{equation*}
By incorporating both estimation strategies, we form the moment conditions $E[\Tilde{U}_t(\alpha, \gamma, \tau)] = 0$ where
\begin{equation*}
    \Tilde{U}_t(\alpha, \gamma, \tau) = \begin{pmatrix}
    g_0(Z_{0,t}) (Y_t - W_t^\prime \alpha) I\{t \leq T_0\} \\
    g_1(Z_{0,t},Z_{1,t}) (Y_t - W_t^\prime \alpha - X_t^\prime \gamma) I\{t > T_0\} \\
    (Y_t - \tau - W_t^\prime \alpha) I\{t > T_0\} \\
    (X_t^\prime \gamma - \tau)I\{t > T_0\}
    \end{pmatrix}~.
\end{equation*}
Then, we can estimate $\tau$ and structural paramters $(\alpha,\gamma)$ simultaneously through the GMM approach that solves
\begin{equation*}
    (\hat{\alpha}, \hat{\gamma}, \tau) =\arg \min _{\alpha, \gamma, \tau} \Tilde m\left(\alpha, \gamma, \tau\right)^{\prime} \Omega \Tilde m\left(\alpha, \gamma, \tau\right)
\end{equation*}
where $\Tilde m\left(\alpha, \gamma,\tau\right) = 1/T \sum_{t=1}^T \Tilde{U}_t(\alpha, \gamma, \tau)$. Similar to that of Theorem \ref{thm:gmm-coefficients}, we then present the following theorem for estimation and inference of ATT.
\begin{theorem}\label{thm:gmm-ate}
    Under Assumptions \ref{as:error-term}-\ref{as:completeness} and \ref{as:stationary} and regularity conditions \ref{as:app-stationary}-\ref{as:app-Gt} listed in Section \ref{sec:proof2}, as $T\rightarrow \infty$, we have
    \begin{equation*}
        \sqrt{T} (\hat\theta - \theta)
         \xrightarrow{d} N\left(0, \left(G^{\top} \Omega G\right)^{-1} G^{\top} \Omega S \Omega^{\top} G\left(G^{\top} \Omega^{\top} G\right)^{-1}\right) 
    \end{equation*}
    where $\theta = (\alpha^\prime, \gamma^\prime, \tau^\prime)^\prime$, $S = \lim_{T\rightarrow \infty} \operatorname{Var}\{\sqrt{T} \Tilde m(\alpha,\gamma, \tau)\}$ is the variance-covariance matrix of the limiting distribution of $\sqrt{T} \Tilde m(\alpha,\gamma, \tau)$ and
    \begin{equation*}
        G = E\left[\begin{pmatrix}
            g_0(Z_{0,t}) W_t^\prime I\{t\leq T_0\} & 0 & 0\\
            g_1(Z_{0,t},Z_{1,t}) W_t^\prime I\{t> T_0\} & g_1(Z_{0,t},Z_{1,t}) X_t^\prime I\{t>T_0\} & 0 \\
            W_t^\prime I\{t > T_0\} & 0 & I\{t > T_0\} \\
            0 & -X_t^\prime I\{t > T_0\} & I\{t > T_0\}
        \end{pmatrix}\right]~.
    \end{equation*}
\end{theorem}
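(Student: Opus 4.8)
The plan is to obtain Theorem~\ref{thm:gmm-ate} as a direct application of the standard asymptotic theory for over-identified GMM estimators with weakly dependent data (as in Theorem~3.2 of \cite{Hall2005}), proceeding exactly as in the proof of Theorem~\ref{thm:gmm-coefficients} but now with the enlarged parameter vector $\theta = (\alpha^\prime,\gamma^\prime,\tau)^\prime$ and the augmented moment function $\Tilde U_t$. Writing $\theta_0$ for the true value, I would first verify that $\Tilde U_t$ meets the hypotheses of the general GMM limit theorems, then invoke those theorems for consistency and asymptotic normality, and finally compute the Jacobian $G$ in closed form and check that it matches the displayed expression.

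The verification step has four items. (a) Correct specification, $E[\Tilde U_t(\theta_0)] = 0$: the first two blocks are the conditional moment restrictions of Theorem~\ref{thm:identification} converted to unconditional ones through the user-specified $g_0,g_1$; the third and fourth blocks hold because, under Assumption~\ref{as:error-term} and Theorem~\ref{thm:identification}, $\tau = E[Y_t(1) - W_t^\prime\alpha \mid t>T_0] = E[X_t^\prime\gamma \mid t>T_0]$, which are precisely the two representations of the ATT derived in the text. (b) Identification: $(\alpha,\gamma)$ is uniquely determined by the first two blocks via the completeness Assumption~\ref{as:completeness} (Theorem~\ref{thm:identification}), after which $\tau$ is pinned down by either of the last two blocks. (c) The stationarity and weak dependence of Assumption~\ref{as:stationary}, together with the regularity conditions~\ref{as:app-stationary}--\ref{as:app-Gt} of Section~\ref{sec:proof2}, supply an ergodic law of large numbers and a central limit theorem for $\Tilde m(\theta_0) = \tfrac1T\sum_{t=1}^T \Tilde U_t(\theta_0)$, a uniform law of large numbers for the sample Jacobian, and finiteness of both the long-run variance $S = \lim_{T\to\infty}\operatorname{Var}\{\sqrt T\,\Tilde m(\theta_0)\}$ and of $G$; here one uses that both $T_0\to\infty$ and $T-T_0\to\infty$, so each indicator-weighted block contributes a diverging number of terms. (d) $G$ has full column rank, hence $G^\top\Omega G$ is invertible --- the one genuinely substantive point, discussed below.

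Granting (a)--(d), asymptotic normality is the textbook argument: the first-order condition $\widehat G^\top\Omega\,\Tilde m(\hat\theta) = 0$, a mean-value expansion of $\Tilde m(\hat\theta)$ about $\theta_0$, consistency of $\hat\theta$, and the uniform law of large numbers (giving $\widehat G \xrightarrow{p} G$) together yield
\begin{equation*}
\sqrt T(\hat\theta - \theta_0) = -(G^\top\Omega G)^{-1}G^\top\Omega\,\sqrt T\,\Tilde m(\theta_0) + o_p(1),
\end{equation*}
so that the central limit theorem $\sqrt T\,\Tilde m(\theta_0) \xrightarrow{d} N(0,S)$ and Slutsky's theorem deliver the stated sandwich limit. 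The explicit $G$ follows by differentiating the four blocks of $\Tilde U_t$ with respect to $(\alpha^\prime,\gamma^\prime,\tau)$ and taking expectations; this reproduces the displayed matrix up to an overall sign, and since the sandwich matrix $(G^\top\Omega G)^{-1}G^\top\Omega S\Omega^\top G(G^\top\Omega^\top G)^{-1}$ is invariant under $G\mapsto -G$, the sign convention is immaterial.

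The main obstacle --- indeed the only step that is not mechanical --- is checking that the augmented $G$ has full column rank. Its block structure makes this manageable: the column attached to $\tau$ is nonzero only in the third and fourth blocks, where it equals $\mp E[I\{t>T_0\}] = \mp (T-T_0)/T$, which the regularity conditions keep bounded away from zero, so $\tau$ is not collinear with the remaining columns once $(\alpha,\gamma)$ are; and full rank of the $(\alpha,\gamma)$ sub-block is exactly the relevance condition already invoked for Theorem~\ref{thm:gmm-coefficients} --- informativeness of $Z_{0,t}$ for $W_t$ in the pre-period and of $(Z_{0,t},Z_{1,t})$ for $(W_t,X_t)$ in the post-period, which Assumption~\ref{as:completeness} implies. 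A secondary technical care point is that $\Tilde m$ averages over both the pre- and post-intervention regimes with indicator weights, so the law of large numbers and central limit theorem must be applied regime by regime and then recombined, which is the role of the requirement that $T_0$ and $T-T_0$ both diverge.
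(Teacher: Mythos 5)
Your proposal is correct and follows essentially the same route as the paper: verify that $E[\Tilde U_t(\theta_0)]=0$ at the true value using the conditional moment restrictions of Theorem~\ref{thm:identification} together with the two ATT representations, and then invoke the standard over-identified GMM asymptotic normality result (Theorem~3.2 of Hall, 2005) under the stationarity, ergodicity, and regularity conditions \ref{as:app-stationary}--\ref{as:app-Gt}. Your additional remarks on the full-rank condition for $G$ and on the sign convention (the displayed $G$ is the negation of the Jacobian, which leaves the sandwich variance unchanged) are accurate elaborations of steps the paper leaves implicit.
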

\begin{remark}
    The GMM approach provided in Theorem \ref{thm:gmm-ate} offers a flexible framework for conducting estimation and inference for various types of causal effects. For example, the moment equations $\Tilde{U}_t(\alpha, \gamma, \tau)$ can be modified for estimating ATT of any specified period of time with $E[(X_t^\prime \gamma - \tau)I\{t_1 < t < t_2\}] = 0$, percentage lift with $E[(X_t^\prime \gamma -  W_t^\prime \alpha \cdot \tau )I\{t_1 < t < t_2\}]$, and so on.
\end{remark}

\section{Extensions}\label{sec:extensions}

In this section, we explore three extensions to our primary framework. First, we incorporate adjustments for measured covariates, which can be beneficial in practice to mitigate the issue of unmeasured confounders. Second, we aim to relax the linear structure in the IFEM model. We outline the necessary assumptions for nonparametric identification, which may be useful for practitioners when designing and implementing nonparametric methods in practice. Finally, we include a useful extension of the surrogacy model to consider ``contaminated'' surrogates. Such surrogates can potentially be alternative outcomes of the target unit or outcomes of other units affected by the intervention. 

\subsection{Adjustment for Measured Covariates}
In practice, one may wish to incorporate available covariate data measured across units and over time, either to account for endogeneity or to improve efficiency. As such, we generalize the IFEM model as follows:
\begin{align*}
    &Y_t(1) = \rho_t^\prime \theta + \delta_t  + \lambda_t^\prime \beta + C_{Y,t}^{\prime} \xi_{Y} + \epsilon_{Y,t} \\
    &Y_t(0) = \lambda_t^\prime \beta + C_{Y,t}^{\prime} \xi_{Y} + \epsilon_{Y,t} \\
    &W_t^\prime = \lambda_t^\prime \Gamma + \xi_W^\prime C_{W,t} +  \epsilon_{W,t}^\prime \\
    & X_t^\prime = \rho_t^\prime \Phi + \xi_{X}^\prime C_{X,t} + \epsilon^\prime_{X,t}~,
\end{align*}
where $C_{Y,t}, C_{W,t}, C_{X,t}$ are vectors/matrices of measured covariates that are not causally impacted by the intervention, i.e. $C_{Y,t}(1) = C_{Y,t}(0) = C_{Y,t}$. We modify Assumption \ref{as:error-term} and \ref{as:existence-of-proxy} to include measured covariates as follows:
\begin{assumption}[Conditional independent errors] \label{as:app-error-term}
The error terms are conditional mean independent and independent of each other. $E\left(\epsilon_{Y,t} \mid \lambda_t, C_{Y,t} \right) = E\left(\epsilon_{W,t} \mid \lambda_t, C_{W,t} \right) = E\left(\delta_t \mid \rho_t \right) =  E\left(\epsilon_{X,t} \mid \rho_t, C_{X,t} \right) = 0$ for $t=1,\dots,T$.
\end{assumption}
\begin{assumption}[Existence of proxies]\label{as:app-existence-of-proxy}
We observe $\{Z_{0,t}, Z_{1,t}\}$ such that
\begin{align*}
    Z_{0,t} &\independent \{Y_i(0), W_t\} \mid \lambda_t, C_{Y,t}, C_{W,t} \text{ for any } t \leq T_0 ~, \\
    \{Z_{0,t}, Z_{1,t}\} &\independent \{Y_i(1), W_t, X_t\} \mid \lambda_t, \rho_t, C_{Y,t}, C_{W,t}, C_{X,t} \text{ for any } t > T_0~.
\end{align*}
\end{assumption}
Let $\Tilde{Y}_t = Y_t - C_{Y,t}^\prime \xi_Y$, $\Tilde{W}_{t} = W_t - C_{W,t}^\prime \xi_W$ and $\Tilde{X}_t = X_t - C_{X,t}^\prime \xi_X$. Under the modified assumptions above, we have the following identification results:
\begin{theorem}[Identification]\label{thm:app-identification}
Under Assumption \ref{as:app-error-term} and \ref{as:existence-of-weight}, we have $E[Y_t(0)] = E[C_{Y,t}^\prime \xi_Y+\Tilde{W}_t^\prime \alpha]$ and $E[Y(1)] = E[C_{Y,t}^\prime \xi_Y+ \Tilde W_t^\prime \alpha + \Tilde X_t^\prime \gamma]$ for any $1\leq t \leq T$. Under Assumption \ref{as:app-error-term}-\ref{as:app-existence-of-proxy}, \ref{as:existence-of-weight} and \ref{as:completeness}, $(\alpha, \gamma)$ is uniquely identified by solving 
\begin{align*}
    E[\Tilde Y_t - \Tilde W_t^\prime \alpha \mid Z_{0,t}, C_{Y,t}, C_{W,t}, t\leq T_0] &= 0 \text{ for any } t\leq T_0 ~,\\
    E[\Tilde Y_t - \Tilde W_t^\prime \alpha - \Tilde X_t^\prime \gamma \mid Z_{0,t}, Z_{1,t}, C_{Y,t}, C_{W,t}, C_{X,t}, t > T_0] &= 0 \text{ for any } t > T_0~.
\end{align*}
\end{theorem}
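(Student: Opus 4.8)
The plan is to prove Theorem~\ref{thm:app-identification} by residualizing out the measured covariates and reducing to Theorem~\ref{thm:identification}. The first observation is that, at the true loadings $\xi_Y,\xi_W,\xi_X$, the residualized quantities $\Tilde{Y}_t,\Tilde{W}_t,\Tilde{X}_t$ satisfy exactly the factor structure of $(\ref{eqn:ifem-donor})$--$(\ref{eqn:ifem-surrogate})$: for $t\leq T_0$ we have $\Tilde{Y}_t=\lambda_t^\prime\beta+\epsilon_{Y,t}$ and $\Tilde{W}_t^\prime=\lambda_t^\prime\Gamma+\epsilon_{W,t}^\prime$, while for $t>T_0$ we have $\Tilde{Y}_t=\rho_t^\prime\theta+\delta_t+\lambda_t^\prime\beta+\epsilon_{Y,t}$, $\Tilde{W}_t^\prime=\lambda_t^\prime\Gamma+\epsilon_{W,t}^\prime$, and $\Tilde{X}_t^\prime=\rho_t^\prime\Phi+\epsilon_{X,t}^\prime$. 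Consequently, the argument establishing Theorem~\ref{thm:identification} applies to $(\Tilde{Y}_t,\Tilde{W}_t,\Tilde{X}_t,Z_{0,t},Z_{1,t})$ once one conditions throughout on $(C_{Y,t},C_{W,t},C_{X,t})$: Assumption~\ref{as:app-error-term} supplies the conditional-mean-zero conditions for the errors, and Assumption~\ref{as:app-existence-of-proxy} supplies the proxy-exclusion restrictions in the enlarged conditioning set.

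For the mean-identification claims I would use iterated expectations, carrying the covariate offset along as in Theorem~\ref{thm:identification}. By Assumption~\ref{as:app-error-term}, $E[\Tilde{W}_t^\prime\alpha\mid\lambda_t,C_{W,t}]=\lambda_t^\prime\Gamma\alpha=\lambda_t^\prime\beta$ using $\Gamma\alpha=\beta$ (Assumption~\ref{as:existence-of-weight}), so $E[Y_t(0)]=E[\lambda_t^\prime\beta]+E[C_{Y,t}^\prime\xi_Y]=E[C_{Y,t}^\prime\xi_Y+\Tilde{W}_t^\prime\alpha]$; likewise $E[\Tilde{X}_t^\prime\gamma]=E[\rho_t^\prime\Phi\gamma]=E[\rho_t^\prime\theta]=E[Y_t(1)-Y_t(0)]$ using $\Phi\gamma=\theta$, which combined with the previous identity gives $E[Y_t(1)]=E[C_{Y,t}^\prime\xi_Y+\Tilde{W}_t^\prime\alpha+\Tilde{X}_t^\prime\gamma]$.

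For the unique identification of $(\alpha,\gamma)$, I would first verify that the two conditional moment equations hold at the truth: substituting the factor expansions and using $\Gamma\alpha=\beta$ and $\Phi\gamma=\theta$, the pre-treatment residual collapses to $\epsilon_{Y,t}-\epsilon_{W,t}^\prime\alpha$ and the post-treatment residual to $\delta_t+\epsilon_{Y,t}-\epsilon_{W,t}^\prime\alpha-\epsilon_{X,t}^\prime\gamma$; conditioning on the proxies and covariates, iterating over $(\lambda_t,\rho_t)$, using Assumption~\ref{as:app-existence-of-proxy} to drop the proxies from the inner conditioning, and then applying Assumption~\ref{as:app-error-term} kills every term. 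Second, for uniqueness, suppose $(\alpha^\prime,\gamma^\prime)$ also solves the system; differencing the pre-treatment equation gives $E[\Tilde{W}_t^\prime(\alpha-\alpha^\prime)\mid Z_{0,t},C_{Y,t},C_{W,t},t\leq T_0]=0$, so the completeness condition (Assumption~\ref{as:completeness}, in its version conditional on the covariates) forces $\Tilde{W}_t^\prime(\alpha-\alpha^\prime)=0$ almost surely, and hence $\alpha=\alpha^\prime$ by non-degeneracy of $\Tilde{W}_t$; feeding $\alpha=\alpha^\prime$ into the differenced post-treatment equation gives $E[\Tilde{X}_t^\prime(\gamma-\gamma^\prime)\mid Z_{0,t},Z_{1,t},C_{Y,t},C_{W,t},C_{X,t},t>T_0]=0$, and completeness again yields $\gamma=\gamma^\prime$.

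The main obstacle here is bookkeeping rather than a new idea. Assumption~\ref{as:app-error-term} imposes conditional-mean-zero given $(\lambda_t,C_{Y,t})$, $(\lambda_t,C_{W,t})$, and $(\rho_t,C_{X,t})$ \emph{separately}, so I would spell out carefully why these are preserved after enlarging the conditioning set to all of $(C_{Y,t},C_{W,t},C_{X,t})$ and the proxies; this step uses the stated mutual independence of the errors together with the proxy-exclusion restriction. A secondary point is that Assumption~\ref{as:completeness} as written does not mention covariates, so I would invoke (or state) its natural conditional-on-$C$ analogue, and, exactly as in Theorem~\ref{thm:identification}, passing from $\Tilde{W}_t^\prime(\alpha-\alpha^\prime)=0$ a.s.\ (resp.\ $\Tilde{X}_t^\prime(\gamma-\gamma^\prime)=0$) to $\alpha=\alpha^\prime$ (resp.\ $\gamma=\gamma^\prime$) relies on the non-degeneracy of the residualized donor (resp.\ surrogate) vector, which follows from the error structure in $(\ref{eqn:ifem-donor})$--$(\ref{eqn:ifem-surrogate})$. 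Throughout, $\xi_Y,\xi_W,\xi_X$ are treated as primitive, consistent with the statement, with their joint identification deferred to the GMM implementation.
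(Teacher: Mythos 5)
Your proposal is correct and follows essentially the same route as the paper, which simply states that the proof is analogous to that of Theorem \ref{thm:identification}: residualize out the covariates, verify the factor structure of $(\Tilde Y_t,\Tilde W_t,\Tilde X_t)$, apply iterated expectations with the enlarged conditioning set, and invoke completeness for uniqueness. You are in fact somewhat more careful than the paper, which does not spell out how the separately-stated conditional-mean-zero conditions combine under the enlarged conditioning set, nor the covariate-conditional version of completeness and the non-degeneracy step needed to pass from $\Tilde W_t^\prime(\alpha-\Tilde\alpha)=0$ a.s.\ to $\alpha=\Tilde\alpha$.
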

\begin{proof}
The proof is similar to that of Theorem \ref{thm:identification} in Appendix \ref{sec:proof}.
\end{proof}
Next, we present an example of applying Theorem \ref{thm:app-identification} to treatment effect estimations. First, let $\xi=(\xi_Y, \xi_W, \xi_X)$, $\Tilde Z_{0,t}=(Z_{0,t}, C_{Y,t}, C_{W,t})$. We have moment conditions $E[\Tilde{U}_t(\alpha, \gamma, \tau, \xi)] = 0$ where
\begin{equation*}
    \Tilde{U}_t(\alpha, \gamma, \tau, \xi) = \begin{pmatrix}
    g_0(\Tilde Z_{0,t}) (Y_t  - W_t^\prime \alpha - C_{Y,t}^\prime \xi_Y - C_{W,t}^\prime \xi_{W}  \alpha) I\{t \leq T_0\} \\
    g_1( \Tilde Z_{0,t},Z_{1,t}, C_{X,t}) (Y_t - W_t^\prime \alpha - X_t^\prime \gamma - C_{Y,t}^\prime \xi_Y - C_{W,t}^\prime \xi_{W}  \alpha - C_{X,t}^\prime \xi_X \gamma) I\{t > T_0\} \\
    (Y_t - \tau - W_t^\prime \alpha - C_{Y,t}^\prime \xi_Y - C_{W,t}^\prime \xi_{W}  \alpha) I\{t > T_0\} \\
    (X_t^\prime \gamma - C_{X,t}^\prime \xi_X \gamma - \tau)I\{t > T_0\}
    \end{pmatrix}~.
\end{equation*}
Lastly, we can employ the standard GMM approach as outlined in Section \ref{sec:id-est} for estimation and inference.

\subsection{Nonparametric Identification and Estimation}

The given approach in section \ref{sec:id-est} is heavily dependent on the linearity assumptions in equations (\ref{eqn:ifem-donor}) and (\ref{eqn:ifem-surrogate}). In practice, this linear structure could potentially be violated. Thus, in this section, we extend our proposed framework to obtain nonparametric identification and estimation. To begin, we replace Assumption \ref{as:existence-of-proxy} with the following assumption:
\begin{assumption}[Existence of confounding bridge]\label{as:nonparam}
    There exists a function $h(W_{t})$ such that the outcome model for $Y_t(0)$ is equivalent to a model for $h(W_t)$:
    \begin{equation*}
        E[Y_t(0) \mid \lambda_t] = E[h(W_t)\mid \lambda_t], \quad \forall t \geq 1.
    \end{equation*}
    Meanwhile, there exists a function $g(X_t)$ such that the causal effect model for $Y_t(1) - Y_t(0)$ is equivalent to a model for $g(X_t)$:
    \begin{equation*}
        E[Y_t(1) - Y_t(0) \mid \rho_t] = E[g(X_t)\mid \rho_t], \quad \forall t \geq 1.
    \end{equation*}
\end{assumption}
From Assumption \ref{as:nonparam}, we can immediately deduce that nonparametric identification of ATT can be expressed as follows:

\begin{equation*}
\tau = E[Y_t(1) - Y_t(0) \mid t > T_0 ] = E[g(X_t) \mid t > T_0 ].
\end{equation*}
Subsequently, the following theorem presents the nonparametric identification results for $h(\cdot), g(\cdot)$:
\begin{theorem}[Moment condition for $h(\cdot), g(\cdot)$]\label{thm:nonparam}
    Under Assumption \ref{as:error-term}, \ref{as:nonparam} and \ref{as:existence-of-proxy}, the confounding bridge function $h(\cdot), g(\cdot)$ satisfies the moment condition
    \begin{align*}
        E[Y_t - h(W_t) \mid Z_{0,t}, t\leq T_0] &= 0 \text{ for any } t\leq T_0 ~,\\
        E[Y_t - h(W_t) - g(X_t) \mid Z_{0,t}, Z_{1,t}, t > T_0] &= 0 \text{ for any } t > T_0~.
    \end{align*}
\end{theorem}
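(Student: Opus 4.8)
The plan is to prove both conditional moment equalities with the same three‑step template: introduce the latent factors by the law of iterated expectations, use Assumption \ref{as:existence-of-proxy} to delete the proxies from the conditioning set, and then invoke the confounding‑bridge identities of Assumption \ref{as:nonparam}. The pre‑intervention equality is immediate. Fix $t\le T_0$; by \eqref{eqn:outcome} we have $Y_t=Y_t(0)$, so conditioning additionally on $\lambda_t$ gives $E[Y_t-h(W_t)\mid Z_{0,t}] = E\big[\,E[Y_t(0)-h(W_t)\mid Z_{0,t},\lambda_t]\mid Z_{0,t}\,\big]$. Since $Z_{0,t}\independent\{Y_t(0),W_t\}\mid\lambda_t$ by the first line of Assumption \ref{as:existence-of-proxy}, the inner expectation equals $E[Y_t(0)-h(W_t)\mid\lambda_t]$, which is $0$ by the first display of Assumption \ref{as:nonparam}; taking the outer expectation finishes this case.

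For the post‑intervention equality, fix $t>T_0$, so $Y_t=Y_t(1)=Y_t(0)+\big(Y_t(1)-Y_t(0)\big)$ by \eqref{eqn:outcome}. Conditioning additionally on the pair $(\lambda_t,\rho_t)$ and applying the second line of Assumption \ref{as:existence-of-proxy}, namely $\{Z_{0,t},Z_{1,t}\}\independent\{Y_t(1),W_t,X_t\}\mid\lambda_t,\rho_t$, reduces the inner conditional expectation to $E[Y_t(1)-h(W_t)-g(X_t)\mid\lambda_t,\rho_t]$. I would then split this as $\big(E[Y_t(0)\mid\lambda_t,\rho_t]-E[h(W_t)\mid\lambda_t,\rho_t]\big) + \big(E[Y_t(1)-Y_t(0)\mid\lambda_t,\rho_t]-E[g(X_t)\mid\lambda_t,\rho_t]\big)$ and argue that each bracket vanishes, after which the outer expectation over $(\lambda_t,\rho_t)$ given $(Z_{0,t},Z_{1,t})$ gives the claim.

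The main obstacle is precisely this last decomposition: Assumption \ref{as:nonparam} supplies the $h$‑identity only conditional on $\lambda_t$ and the $g$‑identity only conditional on $\rho_t$, whereas here I need them conditional on the pair $(\lambda_t,\rho_t)$. I would close this gap by appealing to the independence structure embedded in Assumption \ref{as:error-term}: conditional on $\lambda_t$, the pair $(Y_t(0),W_t)$ depends only on the errors $\epsilon_{Y,t},\epsilon_{W,t}$, which carry no further information about $\rho_t$, so $E[Y_t(0)\mid\lambda_t,\rho_t]=E[Y_t(0)\mid\lambda_t]$ and $E[h(W_t)\mid\lambda_t,\rho_t]=E[h(W_t)\mid\lambda_t]$, making the first bracket equal to $E[Y_t(0)-h(W_t)\mid\lambda_t]=0$; the symmetric argument, using that $\delta_t,\epsilon_{X,t}$ are independent of $\lambda_t$, collapses $E[\cdot\mid\lambda_t,\rho_t]$ to $E[\cdot\mid\rho_t]$ in the second bracket, which is then $0$ by the second display of Assumption \ref{as:nonparam}. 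I expect the only delicate point to be the careful bookkeeping of which variables the conditional‑independence and error‑independence hypotheses actually allow one to drop from each conditioning set; once that is pinned down, the rest is routine iterated‑expectation manipulation.
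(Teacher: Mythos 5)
Your proof follows essentially the same route as the paper's: condition additionally on the latent factors, use Assumption \ref{as:existence-of-proxy} to drop the proxies from the inner conditioning set, invoke the bridge identities of Assumption \ref{as:nonparam}, and integrate out the factors. The pre-intervention case is identical to the paper's argument and is fine.

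The one substantive difference is that you explicitly confront a step the paper's one-line proof (``the same argument applies to the second moment condition'') silently skips: Assumption \ref{as:nonparam} gives the $h$-identity conditional on $\lambda_t$ alone and the $g$-identity conditional on $\rho_t$ alone, whereas the post-intervention argument needs both conditional on the pair $(\lambda_t,\rho_t)$. Your patch --- arguing via the error structure that $E[Y_t(0)-h(W_t)\mid\lambda_t,\rho_t]=E[Y_t(0)-h(W_t)\mid\lambda_t]$ and symmetrically for the second bracket --- is the right idea, but be aware that it leans on two things not literally supplied by the cited hypotheses: (i) Assumption \ref{as:error-term} only asserts mean-independence of each error given its own factor ($\lambda_t$ or $\rho_t$), not given the pair, and ``independent of each other'' refers to the errors, not to independence from the other block of factors; and (ii) in the nonparametric section the IFEM equations \eqref{eqn:ifem-donor}--\eqref{eqn:ifem-surrogate} are nominally being relaxed, so saying that $(Y_t(0),W_t)$ ``depends only on $\lambda_t$ and the errors'' presupposes structure that Assumption \ref{as:nonparam} was meant to replace. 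The clean fix is either to state the bridge identities of Assumption \ref{as:nonparam} conditional on $(\lambda_t,\rho_t)$ in the post-period, or to add the cross-block conditional independence you are using as an explicit hypothesis. This is a defect shared with (indeed, inherited from) the paper's own proof, so I would count your write-up as correct in the same sense the paper's is, and somewhat more honest about where the load-bearing assumption sits.
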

\begin{proof}
The proof follows the same procedure as in Appendix \ref{sec:proof}. By Assumption \ref{as:existence-of-proxy}, we have $E[Y_t - h(W_t) \mid Z_{0,t}, \lambda_t, t\leq T_0] = E[Y_t - h(W_t) \mid \lambda_t, t\leq T_0] = 0$. By integrating over $\lambda_t$, we obtain $E[Y_t - h(W_t) \mid Z_{0,t}, t\leq T_0]  = 0$. The same argument applies to the second moment condition, and thus the result follows.
\end{proof}

\subsection{Contaminated Surrogates}\label{sec:contaminate}
In this section, we extend our IFEM model to accommodate situations where ``pure'' surrogates are not available. Instead, we only observe surrogates that are ``contaminated'' by latent factors that drive the outcomes of the control. In many applications, such contaminated surrogates are essentially alternative outcomes of the target units and are expected to be driven by $\lambda_t$ in (\ref{eqn:ifem-donor}) in the pre-intervention periods, and driven by both $\lambda_t$ and $\rho_t$ in (\ref{eqn:ifem-surrogate}) in the post-intervention periods. For example, if the target outcome of interest is the GDP growth of California, contaminated surrogates can potentially be inflation and the unemployment rate of California, or the GDP growth, inflation, and unemployment rate of other states affected by the treatment.

To formalize this setup, we consider the following latent factor model for the surrogates as a replacement of $X_t$ from equation (\ref{eqn:ifem-surrogate}):
\begin{equation}\label{eqn:ifem2}
\begin{split}
    & X_t^\prime(0) = \lambda_t^\prime \Theta + \epsilon^\prime_{X,t} \\
    & X_t^\prime(1) - X_t^\prime(0) = \rho_t^\prime \Phi + \delta_{X,t} ~,
\end{split}
\end{equation}
where $\Theta$ is a ($F\times H$) matrix of factor loadings. In addition to Assumption \ref{as:app-existence-of-proxy}, we assume the following:
\begin{assumption}[Existence of synthetic control for contaminated surrogates]\label{as:existence-of-weight2}
There exist a $(N\times H)$ matrix of weights $\Psi$ such that $\Gamma \Psi = \Theta$.
\end{assumption}
\begin{remark}
    Similar to Remark \ref{remark:exist}, such an assumption holds when the number of donors exceeds the number of latent factors $\lambda_t$, i.e. $N \geq F$.
\end{remark}
Under the model of contaminated surrogates and the additional Assumption \ref{as:existence-of-weight2}, we have the following results for identification:
\begin{theorem}[Identification]\label{thm:identification_contaminate}
Under Assumption \ref{as:error-term}-\ref{as:existence-of-weight} and \ref{as:existence-of-weight2}, we have $E[Y_t(0)] = E[W_t^\prime \alpha]$, $E[X_t(0)^\prime] = E[W_t^\prime \Psi]$ and $E[Y(1)] = E[W_t^\prime \alpha + (X_t^\prime(1) - W_t^\prime \Psi) \gamma]$ for any $1\leq t \leq T$. Under Assumption \ref{as:error-term}-\ref{as:completeness}, $(\alpha, \gamma)$ is uniquely identified by solving 
\begin{align*}
    E[Y_t - W_t^\prime \alpha \mid Z_{0,t}, t\leq T_0] &= 0 \text{ for any } t\leq T_0 ~,\\
    E[X_t - W_t^\prime \Psi \mid Z_{0,t}, t\leq T_0] &= 0 \text{ for any } t\leq T_0 ~,\\
    E[Y_t - W_t^\prime \alpha - (X_t^\prime -  W_t^\prime \Psi)\gamma \mid Z_{0,t}, Z_{1,t}, t > T_0] &= 0 \text{ for any } t > T_0~.
\end{align*}
\end{theorem}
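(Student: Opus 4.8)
The plan is to mirror the proof of Theorem \ref{thm:identification}, adding one extra layer for the contaminated-surrogate decomposition. First I would establish the three expectation identities. For $E[Y_t(0)] = E[W_t^\prime \alpha]$, apply Assumption \ref{as:existence-of-weight} ($\Gamma\alpha = \beta$) together with the IFEM in (\ref{eqn:ifem-donor}): taking expectations of $W_t^\prime\alpha = \lambda_t^\prime \Gamma \alpha + \epsilon_{W,t}^\prime \alpha = \lambda_t^\prime\beta + \epsilon_{W,t}^\prime\alpha$ and using $E[\epsilon_{W,t}\mid\lambda_t]=0$ and $E[\epsilon_{Y,t}\mid\lambda_t]=0$ from Assumption \ref{as:error-term} gives $E[W_t^\prime\alpha] = E[\lambda_t^\prime\beta] = E[Y_t(0)]$. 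The identity $E[X_t(0)^\prime] = E[W_t^\prime\Psi]$ is identical in form, using $\Gamma\Psi = \Theta$ from Assumption \ref{as:existence-of-weight2} and the contaminated model (\ref{eqn:ifem2}) for $X_t^\prime(0) = \lambda_t^\prime\Theta + \epsilon_{X,t}^\prime$. For $E[Y(1)]$, write $Y_t(1) = Y_t(0) + (Y_t(1)-Y_t(0)) = \lambda_t^\prime\beta + \epsilon_{Y,t} + \rho_t^\prime\theta + \delta_t$; then expand the candidate $W_t^\prime\alpha + (X_t^\prime(1) - W_t^\prime\Psi)\gamma$, substituting $X_t^\prime(1) = \lambda_t^\prime\Theta + \rho_t^\prime\Phi + \epsilon_{X,t}^\prime + \delta_{X,t}^\prime$ and $W_t^\prime = \lambda_t^\prime\Gamma + \epsilon_{W,t}^\prime$, so the $\lambda_t^\prime\Theta\gamma$ term cancels against $\lambda_t^\prime\Gamma\Psi\gamma$ (using $\Gamma\Psi=\Theta$), leaving $\lambda_t^\prime\beta + \rho_t^\prime\Phi\gamma + (\text{mean-zero error terms})$. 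Using $\Phi\gamma = \theta$ from Assumption \ref{as:existence-of-weight} and the conditional-mean-zero conditions, this has expectation $E[\lambda_t^\prime\beta + \rho_t^\prime\theta] = E[Y_t(1)]$.

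Next I would derive the three conditional moment equations. The first, $E[Y_t - W_t^\prime\alpha \mid Z_{0,t}, t\le T_0]=0$, is exactly as in Theorem \ref{thm:identification}: on $\{t\le T_0\}$, $Y_t = Y_t(0)$, and $Y_t - W_t^\prime\alpha = \epsilon_{Y,t} - \epsilon_{W,t}^\prime\alpha$ (the factor parts cancel by $\Gamma\alpha=\beta$), which is conditional-mean-zero given $\lambda_t$; then invoke Assumption \ref{as:app-existence-of-proxy} (the pre-period part, $Z_{0,t}\independent\{Y_i(0),W_t\}\mid\lambda_t$) to replace conditioning on $\lambda_t$ by conditioning on $Z_{0,t}$ and integrate out $\lambda_t$. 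The second equation, $E[X_t - W_t^\prime\Psi \mid Z_{0,t}, t\le T_0]=0$, follows by the same argument applied to the contaminated-surrogate pre-period model: $X_t - W_t^\prime\Psi = \epsilon_{X,t} - \epsilon_{W,t}^\prime\Psi$ is mean-zero given $\lambda_t$, and $Z_{0,t}$ is independent of $\{W_t, X_t\}$ given $\lambda_t$ (this independence for $X_t$ in the pre-period should be noted as following from Assumption \ref{as:app-existence-of-proxy} since there $X_t$ is governed by $\lambda_t$ alone). The third equation is the post-period analogue: on $\{t>T_0\}$, $Y_t - W_t^\prime\alpha - (X_t^\prime - W_t^\prime\Psi)\gamma$ reduces, after the cancellations above, to a linear combination of $\epsilon_{Y,t}, \epsilon_{W,t}, \delta_t, \epsilon_{X,t}, \delta_{X,t}$, all conditional-mean-zero given $(\lambda_t,\rho_t)$; then Assumption \ref{as:app-existence-of-proxy} (post-period part) gives $\{Z_{0,t},Z_{1,t}\}\independent\{Y_i(1),W_t,X_t\}\mid\lambda_t,\rho_t$, so conditioning can be swapped to $(Z_{0,t},Z_{1,t})$ and the latent factors integrated out.

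Finally, uniqueness. I would argue that any solution $(\tilde\alpha,\tilde\Psi,\tilde\gamma)$ of the three conditional moment equations must coincide with $(\alpha,\Psi,\gamma)$. Subtracting the true system from a candidate system, the first equation gives $E[W_t^\prime(\alpha-\tilde\alpha)\mid Z_{0,t}, t\le T_0]=0$, and Assumption \ref{as:completeness} (the pre-period completeness of $Z_{0,t}$ for $W_t$) forces $W_t^\prime(\alpha-\tilde\alpha)=0$ a.s.; combined with a rank/non-degeneracy condition on $W_t$ (implicitly $E[W_tW_t^\prime]$ nonsingular, as in Remark after Theorem \ref{thm:gmm-coefficients}) this yields $\tilde\alpha=\alpha$. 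The second equation similarly gives $W_t^\prime(\Psi-\tilde\Psi)=0$, hence $\tilde\Psi=\Psi$, column by column. Plugging $\tilde\alpha=\alpha,\tilde\Psi=\Psi$ into the third equation yields $E[(X_t^\prime-W_t^\prime\Psi)(\gamma-\tilde\gamma)\mid Z_{0,t},Z_{1,t}, t>T_0]=0$, and the post-period completeness condition (Assumption \ref{as:completeness} applied to $h(W_t,X_t)=(X_t^\prime-W_t^\prime\Psi)(\gamma-\tilde\gamma)$, a function of $(W_t,X_t)$) gives $(X_t^\prime-W_t^\prime\Psi)(\gamma-\tilde\gamma)=0$ a.s.; non-degeneracy of the residualized surrogate $X_t - \Psi^\prime W_t$ then delivers $\tilde\gamma=\gamma$. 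The main obstacle I anticipate is the uniqueness step for $\gamma$: one must verify that after substituting in the identified $\Psi$, the relevant object is genuinely a function of the conditioning-compatible variables $(W_t,X_t)$ so that Assumption \ref{as:completeness} applies cleanly, and that the residualized surrogate does not degenerate — this is where the implicit rank conditions (analogous to the invertibility of $G^\top\Omega G$ noted in the remarks) do the real work, and it should be stated explicitly rather than glossed over.
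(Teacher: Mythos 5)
Your proposal is correct and follows essentially the same route as the paper, whose own proof of this result is simply the statement that it is ``similar to that of Theorem \ref{thm:identification}'': the same factor-loading cancellations via $\Gamma\alpha=\beta$, $\Gamma\Psi=\Theta$, $\Phi\gamma=\theta$, the same conditioning swap from latent factors to proxies, and the same completeness-based uniqueness argument applied equation by equation. Your explicit flagging of the two points the paper leaves implicit --- that the pre-period proxy condition must also cover $X_t$ for the second moment equation, and that non-degeneracy of $W_t$ and of the residualized surrogate $X_t-\Psi^\prime W_t$ is needed to pass from $W_t^\prime(\alpha-\tilde\alpha)=0$ and $(X_t^\prime-W_t^\prime\Psi)(\gamma-\tilde\gamma)=0$ to equality of the parameters --- is, if anything, more careful than the paper's treatment.
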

\begin{proof}
The proof is similar to that of Theorem \ref{thm:identification} in Appendix \ref{sec:proof}.
\end{proof}
Next, we use GMM approach for estimation and inference about the average treatment effects on the treated (ATT). The moment equations defining the ATT can be given as $E[(Y_t - \tau - W_t^\prime \alpha)I\{t > T_0\}] = 0$ and $E[((X_t^\prime - W_t^\prime \Psi)\gamma - \tau)I\{t > T_0\}] = 0$.

\section{Simulations}\label{sec:simulation}
In this section, we examine the finite sample performance of our proposed method under various conditions. We simulate time series data for $N$ control units and one treated unit with fixed pre-treatment period $T_0=100$ across different lengths of total periods $T=200, 400, 800$. We assume the following data generating mechanism:
\begin{align*}
    Y_t &= I\{t > T_0\}(\rho_t^\prime \theta + \delta_t) +\lambda_t^{\prime} \beta +C_{Y,t}^{\prime} \xi +\epsilon_{Y, t} \\
    W_{i,t} &=  \lambda_t^\prime \Gamma_i+C_{W,i,t}^{\prime} \xi+\epsilon_{i,t} \\
    X_{i,t} &=  \rho_t^\prime \Phi_{i} + C_{X,i,t}^{\prime} \xi + \epsilon^\prime_{X,t}
\end{align*}
where $\xi=1$ or $0$ corresponds to scenarios with or without measured covariates, $C_{Y,t}, C_{W,i,t}, C_{X,i,t} \stackrel{i . i . d}{\sim} N(0, 1)$ denotes a measured covariate, and $\epsilon_{Y, t}, \epsilon_{i,t}, \epsilon_{X,i,t} \stackrel{i . i . d}{\sim} N(0,1)$ denotes random errors. We simulate a vector of latent factors $\lambda_t = (\lambda_{1,t}, \dots, \lambda_{F,t})^\prime$ for donor model, where $\lambda_{k,t} \stackrel{i . i . d}{\sim} N(1, 1) \text{ or } \lambda_{k,t} \stackrel{i . i . d}{\sim} N(\text{log}(t), 1)$, $k=1,\dots, F$, and $F=1,5,10$, and a vector of latent factors $\rho_t = \{\rho_{1,t},\dots, \rho_{K,t}\}$ for surrogate model, where $\rho_{i,t} \stackrel{i . i . d}{\sim} N(\mu_i, 1)$, $j=1,\dots, K$, and $K=1,5,10$. We set $\mu_1 = 1$ and $\mu_{i} = 0$ for $2\leq i\leq K$. In this case, the average treatment effect is equal to one, i.e. $\tau = 1$. For each setting, we assume the number of control units $N = 2F$ and the number of surrogates $H = 2 K$, and the first half of the control units and surrogates constitute the donor and surrogate pool. We specify factor loadings as follows: $\beta = (1,\dots, 1)^\prime$ and $\theta = (1,\dots, 1)^\prime$ are two vectors of ones, and $\Gamma=(I_{F},I_{F}) \Phi=(I_{K},I_{K})$ stack two identity matrices, which correspondingly generate donors/surrogates and proxies. Respectively, $\Phi_i$ and $\Gamma_i$ correpsond to $i$-th column of $\Phi$ and $\Gamma$ . This specification leads to exact identified linear systems with solutions $\beta = \alpha$ and $\gamma = \theta$. Notably, the SC weights $\alpha$ do not sum to one.

To estimate $\tau$, we implement our proposed method taking the first half of control units as donors $\{W_{i,t}: i = 1,\dots, F\}$ and second half of control units as supplemental proxies $\{W_{i,t}: i = F+1,\dots, 2F\}$. Similarly, we use the first half of surrogates $\{X_{i,t}: i = 1,\dots, K\}$ to predict causal effects and the second half of surrogates as proxies $\{X_{i,t}: i = K+1,\dots, 2 K\}$. When there exists a measured covariate (i.e., $\xi=1$) which is predictive of the outcome, we implement our method with and without covariate adjustment to investigate whether there is an efficiency gain from adjusting for such a predictor of the outcome. As a comparison, we implement the following methods:
\begin{enumerate}
    \item \textbf{(SC)} The unconstrained OLS under following linear regression model: $Y_t = \alpha_0 + I\{t>T_0\} \tau + W_{t}^\prime \alpha  + \nu_t$, $t=1,\dots, T$. 
    \item \textbf{(SC-S)} The unconstrained OLS with surrogates under following linear regression model: $Y_t = \alpha_0 + I\{t>T_0\} X_{t}^\prime \gamma + W_{t}^\prime \alpha  + \nu_t$, $t=1,\dots, T$. We perform inference through the following moment equations:
    \begin{equation*}
        \Tilde{U}_t(\alpha, \gamma, \tau) = \begin{pmatrix}
        I\{t>T_0\} X_{t} (Y_t - (I\{t>T_0\} X_{t}^\prime \gamma + W_{t}^\prime \alpha  )\\
        W_{t} (Y_t - (I\{t>T_0\} X_{t}^\prime \gamma + W_{t}^\prime \alpha ) )\\
        C_{Y,t} (Y_t - (I\{t>T_0\} X_{t}^\prime \gamma + W_{t}^\prime \alpha ) )\\
        (X_t^\prime \gamma - \tau)I\{t > T_0\}
    \end{pmatrix}~.
    \end{equation*}
    \item \textbf{(PI)} The proximal inference method without using surrogates in \cite{PI2021}, which is based on the following moment conditions:
    \begin{equation*}
        \Tilde{U}_t(\alpha, \gamma, \tau) = \begin{pmatrix}
        Z_{0,t} (Y_t - W_t^\prime \alpha) I\{t \leq T_0\} \\
        (Y_t - \tau - W_t^\prime \alpha) I\{t > T_0\}
    \end{pmatrix}~.
    \end{equation*}
    \item \textbf{(PI-P)} A proximal inference method using post-treatment period only. Specifically, we use the following moment conditions:
    \begin{equation*}
        \Tilde{U}_t(\alpha, \gamma, \tau) = \begin{pmatrix}
        Z_{0,t} (Y_t - W_t^\prime \alpha - X_t^\prime \gamma) I\{t > T_0\} \\
        Z_{1,t} (Y_t - W_t^\prime \alpha - X_t^\prime \gamma) I\{t > T_0\} \\
        (X_t^\prime \gamma - \tau)I\{t > T_0\}
    \end{pmatrix}~.
    \end{equation*}
    \item \textbf{(PI-S)} Our proposed proximal inference with surrogates from Section \ref{sec:id-est}. Specifically, we use the following moment conditions:
    \begin{equation*}
        \Tilde{U}_t(\alpha, \gamma, \tau) = \begin{pmatrix}
        Z_{0,t} (Y_t - W_t^\prime \alpha) I\{t \leq T_0\} \\
        Z_{1,t} (Y_t - W_t^\prime \alpha - X_t^\prime \gamma) I\{t > T_0\} \\
        (X_t^\prime \gamma - \tau)I\{t > T_0\}
    \end{pmatrix}~.
    \end{equation*}
\end{enumerate}

In Figure \ref{fig:simulation}, we visualize the performance of the classic synthetic control regression \textbf{SC} and our approach \textbf{PI-S} with time series plots of estimated causal effects, plotted against the ground truth. Noticeably, both \textbf{SC} and \textbf{PI-S} move closely together in the pre-intervention period but diverge increasingly during the post-intervention period. It is evident that \textbf{PI-S} produces an estimated causal effect series that is more closely aligned with the ground truth than \textbf{SC} in the post-intervention period.

\begin{figure}[ht!]
    \centering
    \includegraphics[width=0.7\textwidth]{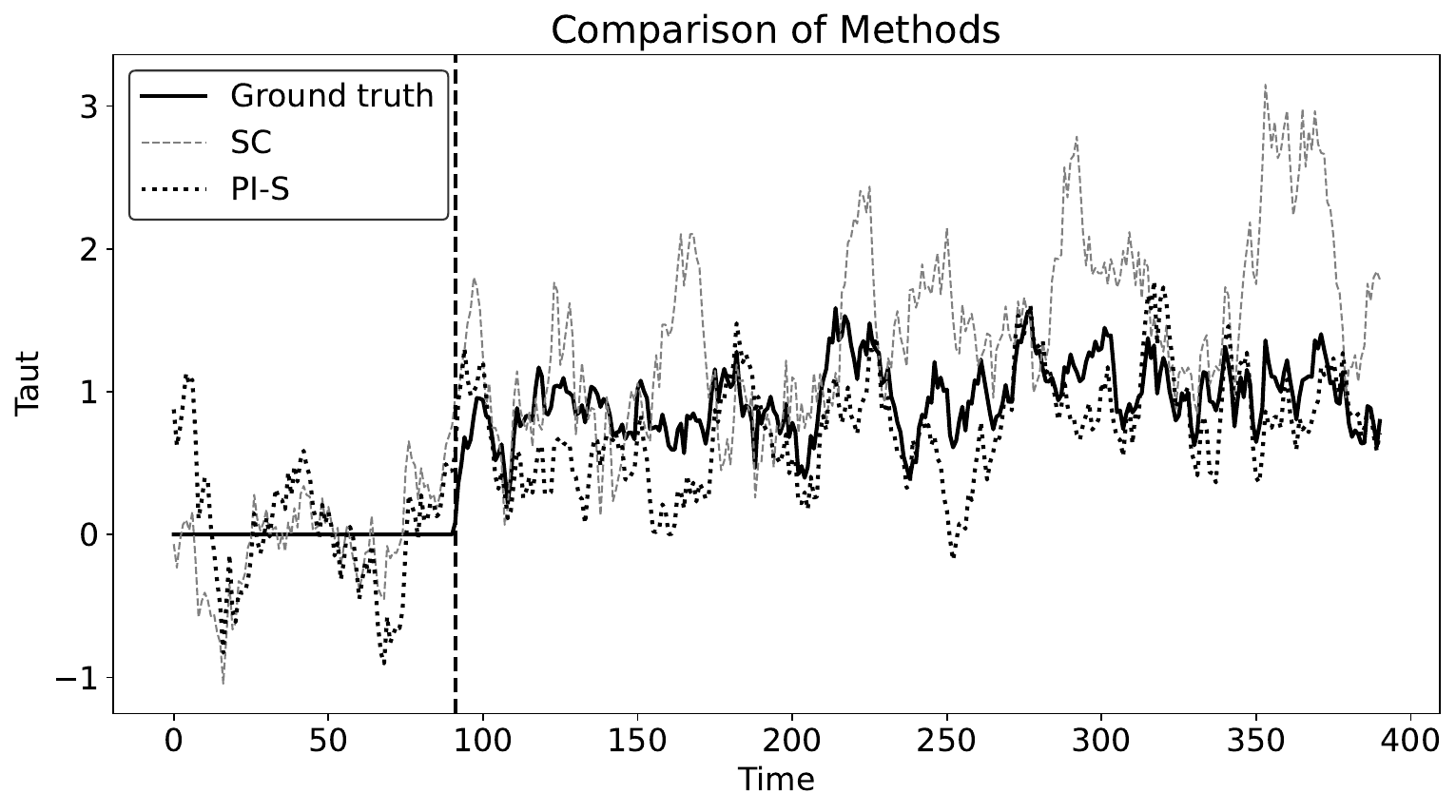}
    \caption{Time series of causal effects under \textbf{SC} and \textbf{PI-S}}
    \label{fig:simulation}
\end{figure}

Table \ref{table:mse} reports the Mean Squared Errors (MSE) of each estimation method for both independent and AR(1) errors, computed across 2,000 Monte Carlo iterations. We compare our proposed method, \textbf{PI-S}, with other methods under two data generating processes, one with an i.i.d normal latent factor $\lambda_t$ with a mean and variance equal to one, and another with an i.i.d normal $\lambda_t$ with a time trend $\log(t)$. We find that \textbf{PI-S} outperforms all other methods under almost all model specifications, having the lowest MSE. However, under $\lambda_t \sim N(1,1)$, \textbf{SC} and \textbf{SC-S} perform better than \textbf{PI-S} when the number of factors is large relative to the number of total time periods, i.e. $K=5, T=200$, possibly due to the "curse of dimensionality." Additionally, we observe that the performance of \textbf{PI-P} and \textbf{PI-S} improves as the post-treatment period increases, while the performance of other methods deteriorates, highlighting the advantage of our approach in the case of long post-treatment periods. Finally, we also find that as the dimension of latent factors increases, the performance of all proximal inference methods declines, particularly for \textbf{PI-P}.

\begin{table}[ht!]
\centering
\setlength{\tabcolsep}{5.5pt}
\begin{adjustbox}{max width=0.95\linewidth,center}
\begin{tabular}{lllllllllllll}
\toprule
       &              & \multicolumn{5}{c}{$\lambda_t \sim N(1,1)$}                      &  & \multicolumn{5}{c}{$\lambda_t \sim N(\log(t),1)$}                      \\  \cmidrule{3-7} \cmidrule{9-13}
Std Error & ($K,T$)  &  \textbf{SC}   & \textbf{SC-S}   & \textbf{PI} & \textbf{PI-P}  & \textbf{PI-S}   & & \textbf{SC}   & \textbf{SC-S}   & \textbf{PI} & \textbf{PI-P}  & \textbf{PI-S}   \\
\toprule
\multirow{6}{*}{Robust} & $(1,200)$  & 0.057 & 0.066 & 0.060 & 0.073 & 0.048 &  & 0.684  & 0.690  & 0.078 & 0.115 & 0.051 \\
                        & $(1,400)$  & 0.030 & 0.033 & 0.031 & 0.022 & 0.018 &  & 1.232  & 1.236  & 0.056 & 0.034 & 0.025 \\
                        & $(1,800)$  & 0.022 & 0.024 & 0.023 & 0.009 & 0.010 &  & 1.961  & 1.963  & 0.061 & 0.014 & 0.020 \\
                        & $(5,200)$  & 0.150 & 0.158 & 0.215 & 2.617 & 0.201 &  & 6.886  & 6.749  & 0.290 & 1.744 & 0.231 \\
                        & $(5,400)$ & 0.079 & 0.082 & 0.134 & 0.117 & 0.078 &  & 15.629 & 15.549 & 0.218 & 0.137 & 0.102 \\
                        & $(5,800)$ & 0.057 & 0.058 & 0.112 & 0.044 & 0.048 &  & 23.770 & 23.742 & 0.247 & 0.050 & 0.081 \\
\multicolumn{1}{l}{}    &            & \multicolumn{1}{c}{} & \multicolumn{1}{c}{} & \multicolumn{1}{c}{} & \multicolumn{1}{c}{} & \multicolumn{1}{c}{} & \multicolumn{1}{c}{} & \multicolumn{1}{c}{} & \multicolumn{1}{c}{} & \multicolumn{1}{c}{} & \multicolumn{1}{c}{} & \multicolumn{1}{c}{} \\
\multirow{6}{*}{HAC}    & $(1,200)$  & 0.062 & 0.071 & 0.065 & 0.076 & 0.049 &  & 0.674  & 0.684  & 0.086 & 0.122 & 0.057 \\
                        & $(1,400)$  & 0.034 & 0.036 & 0.035 & 0.023 & 0.019 &  & 1.235  & 1.236  & 0.067 & 0.036 & 0.028 \\
                        & $(1,800)$  & 0.024 & 0.026 & 0.024 & 0.010 & 0.010 &  & 2.004  & 2.004  & 0.072 & 0.015 & 0.022 \\
                        & $(5,200)$  & 0.162 & 0.168 & 0.258 & 6.369 & 0.214 &  & 6.931  & 6.807  & 0.360 & 7.122 & 0.262 \\
                        & $(5,400)$ & 0.083 & 0.086 & 0.149 & 0.123 & 0.083 &  & 15.720 & 15.634 & 0.269 & 0.144 & 0.120 \\
                        & $(5,800)$ & 0.062 & 0.063 & 0.120 & 0.045 & 0.050 &  & 24.065 & 24.064 & 0.279 & 0.058 & 0.090           
 \\
\bottomrule
\end{tabular}
\end{adjustbox}
\caption{Mean Squared Errors under various methods}
\label{table:mse}
\end{table}

Table \ref{table:cover} reports coverage rates at a 5\% significance level for different estimation methods. Under the stationary model with $\lambda_t \sim N(1,1)$, all five methods achieve coverage rates of approximately 95\%. However, under the model with a time trend, i.e., $\lambda_t \sim N(\log(t),1)$, regression-based methods (\textbf{SC} and \textbf{SC-S}) produce a large bias and fail to cover the ground truth, while all three proximal inference methods (\textbf{PI}, \textbf{PI-P}, and \textbf{PI-S}) exhibit good coverage rates. The comparison of coverage rates in the "robust" section versus the "HAC" section demonstrates the validity of our HAC robust inference approach for models with time series autocorrelation. When the dimension of latent factors is large relative to the number of observations, the proximal inference methods with surrogates (\textbf{PI-P} and \textbf{PI-S}) tend to become conservative. Importantly, alongside Table \ref{table:mse}, we demonstrate the potential of applying proximal inference to estimate causal effects with post-intervention data only by highlighting relatively low MSEs and good coverage rates. Finally, Table \ref{table:mse_cov} in the Appendix showcases the benefits of incorporating measured covariates.

\begin{table}[ht!]
\centering
\setlength{\tabcolsep}{3pt}
\begin{adjustbox}{max width=0.95\linewidth,center}
\begin{tabular}{lllllllllllll}
\toprule
       &              & \multicolumn{5}{c}{$\lambda_t \sim N(1,1)$}                      &  & \multicolumn{5}{c}{$\lambda_t \sim N(\log(t),1)$}                      \\  \cmidrule{3-7} \cmidrule{9-13}
Std Error  & ($K,T$)  &  \textbf{SC}   & \textbf{SC-S}   & \textbf{PI} & \textbf{PI-P}  & \textbf{PI-S}   & & \textbf{SC}   & \textbf{SC-S}   & \textbf{PI} & \textbf{PI-P}  & \textbf{PI-S}   \\
\toprule
\multirow{6}{*}{Robust} & $(1,200)$  & 94.00\% & 91.85\% & 95.05\% & 93.25\% & 94.25\% &  & 11.40\% & 12.75\% & 94.55\% & 94.40\% & 94.25\% \\
                        & $(1,300)$  & 94.65\% & 93.35\% & 94.35\% & 94.50\% & 94.55\% &  & 0.00\%  & 0.00\%  & 94.40\% & 94.95\% & 94.40\% \\
                        & $(1,800)$  & 94.60\% & 94.00\% & 94.45\% & 94.75\% & 94.75\% &  & 0.00\%  & 0.00\%  & 94.30\% & 94.35\% & 95.00\% \\
                        & $(5,200)$  & 94.35\% & 94.30\% & 96.70\% & 97.65\% & 97.05\% &  & 0.15\%  & 0.15\%  & 95.95\% & 97.60\% & 97.60\% \\
                        & $(5,400)$ & 94.40\% & 93.95\% & 95.90\% & 95.75\% & 96.35\% &  & 0.00\%  & 0.00\%  & 95.45\% & 95.45\% & 96.00\% \\
                        & $(5,800)$ & 95.05\% & 94.65\% & 94.35\% & 95.35\% & 94.60\% &  & 0.00\%  & 0.00\%  & 94.30\% & 95.55\% & 94.80\% \\
\multicolumn{1}{l}{}    &            & \multicolumn{1}{c}{} & \multicolumn{1}{c}{} & \multicolumn{1}{c}{} & \multicolumn{1}{c}{} & \multicolumn{1}{c}{} & \multicolumn{1}{c}{} & \multicolumn{1}{c}{} & \multicolumn{1}{c}{} & \multicolumn{1}{c}{} & \multicolumn{1}{c}{} & \multicolumn{1}{c}{} \\
\multirow{6}{*}{HAC}    & $(1,200)$  & 93.30\% & 91.05\% & 93.85\% & 92.90\% & 93.80\% &  & 14.70\% & 15.35\% & 94.10\% & 93.10\% & 93.25\% \\
                        & $(1,400)$  & 92.95\% & 92.20\% & 93.95\% & 93.95\% & 94.55\% &  & 0.05\%  & 0.10\%  & 93.40\% & 93.95\% & 93.75\% \\
                        & $(1,800)$  & 94.05\% & 93.00\% & 94.65\% & 94.25\% & 95.30\% &  & 0.00\%  & 0.00\%  & 92.45\% & 94.95\% & 93.50\% \\
                        & $(5,200)$  & 93.60\% & 92.80\% & 95.45\% & 97.95\% & 96.40\% &  & 0.45\%  & 0.45\%  & 93.75\% & 97.30\% & 96.20\% \\
                        & $(5,400)$ & 93.75\% & 93.45\% & 94.70\% & 95.15\% & 96.15\% &  & 0.00\%  & 0.00\%  & 93.55\% & 95.45\% & 95.10\% \\
                        & $(5,800)$ & 94.85\% & 94.85\% & 94.40\% & 95.65\% & 95.40\% &  & 0.00\%  & 0.00\%  & 92.95\% & 94.45\% & 94.35\%          
 \\
\bottomrule
\end{tabular}
\end{adjustbox}
\caption{Coverage rate under various methods}
\label{table:cover}
\end{table}

\section{Empirical Application}\label{sec:empirical}
In this section, we examine and compare the five synthetic control methods discussed in Section \ref{sec:simulation} (\textbf{SC}, \textbf{SC-S}, \textbf{PI}, \textbf{PI-S}, and \textbf{PI-P}) using data collected from the study of the Panic of 1907, one of the most severe financial crises in US history. \cite{Fohlin2021} re-analyzed the impact of this event on NYC trusts using a new high-frequency dataset on market valuations and a generalized synthetic control method proposed by \cite{xu2017}. Their analysis employs a dataset covering 59 trust companies from January 5, 1906, to December 30, 1908, with a triweekly frequency, derived from the "Trust and Surety Company" table published in the New York Tribune. These trust companies are divided into three groups based on their connections and status during the crisis: the "troubled" group, the "connected" group, and the "independent" group. The troubled group comprises three trusts that experienced severe runs during the panic, while the connected group consists of seven trusts linked to four major "money trusts." The remaining 49 trusts are classified as the independent group.

In the original analysis by \cite{Fohlin2021}, the primary focus was on assessing the impact of the panic on both troubled and connected trusts collectively, taking into account the average performance of firms within each respective group. However, our study aims to highlight the disparities in estimation outcomes arising from the application of various methods explored in this paper. To achieve this, we consider the logarithm of the stock mid-price of Knickerbocker Trust Company as the target outcome, with the three troubled trusts serving as surrogates. Specifically, we use the bid price (logarithm) of these trusts, including the bid price (logarithm) of Knickerbocker as surrogates, with their corresponding ask prices acting as proxies. Similarly, the mid-prices (logarithm) of the 48 independent trusts are used as donors. \footnote{The original analysis consists of 49 independent trusts, one of which has a missing date and thus was dropped in our analysis. We adopt the contaminated surrogate approach from Section \ref{sec:contaminate}, as it is arguably more suitable for this dataset.}

\begin{figure}[ht!]
    \centering
    \includegraphics[width=0.7\textwidth]{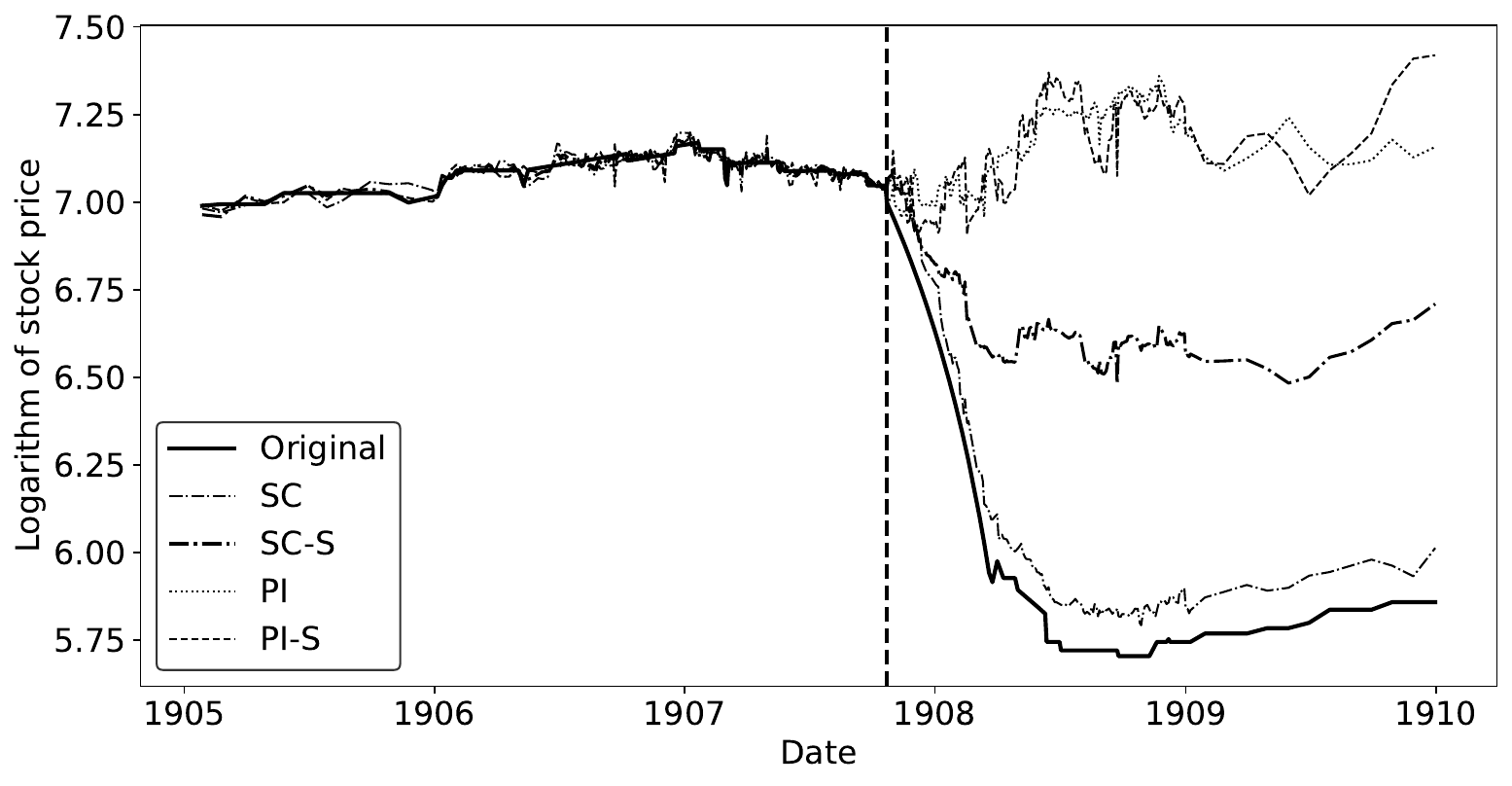}
    \caption{Stock price of Knickerbocker and its synthetic controls using various methods}
    \label{fig:empirical}
\end{figure}

Figure \ref{fig:empirical} displays the trends of the logarithmic stock price of Knickerbocker Trust Company alongside the synthetic control estimates derived from the five methods presented in the Section \ref{sec:simulation}. The plot reveals that the classic synthetic control regression generates a counterfactual series closely aligned with the observed outcome in the post-intervention period, resulting in a relatively small causal effect. In contrast, the proximal inference methods (\textbf{PI} and \textbf{PI-S}) yield series that visually extend the pre-intervention price trend more consistently, suggesting that these approaches might be better suited to this context. 
Notably, the post-intervention predictions, particularly with \textbf{PI} and \textbf{PI-S}, exhibit higher volatility compared to the pre-intervention series, reflecting the inherent uncertainty associated with the estimates. Meanwhile, \textbf{SC-S} remains below the pre-intervention average price but also appears to perform relatively well.

\mycomment{
\begin{table}[!h]
\centering
\setlength{\tabcolsep}{10pt}
\begin{tabular}{ccccccc}
\toprule
\multicolumn{1}{l}{\#days dropped} & SC      & SC-S    & PI      & PI-S    & PI-P    \\
\midrule
\multirow{2}{*}{0}                 & -0.0406 & -0.7662 & -1.0467 & -1.0549 & -0.531  \\
                                   & (0.0324)  & (0.0729)  & (0.3448)  & (0.3527)  & (0.7551)  \\ 
\multirow{2}{*}{50}                & -0.0421 & -0.2708 & -0.9485 & -0.8948 & -0.8304 \\
                                   & (0.0282)  & (0.0565)  & (0.7028)  & (0.6446)  & (2.6497)  \\ 
\multirow{2}{*}{70}                & -0.0411 & -0.4848 & -0.6716 & -0.7036 & -0.3167 \\
                                   & (0.0272)  & (0.0678)  & (0.8148)  & (0.7985)  & (1.8512)  \\ 
\multirow{2}{*}{90}                & -0.0249 & -0.6511 & -0.6746 & -0.7435 & -1.6791 \\
                                   & (0.0238)  & (0.0698)  & (0.3785)  & (0.4890)   & (3.7115)  \\ 
\multirow{2}{*}{110}               & -0.0164 & -0.7061 & -0.9538 & -0.9446 & -2.4624 \\
                                   & (0.0277)  & (0.0852)  & (0.2085)  & (0.2120)   & (8.0026) 
\\\bottomrule
\end{tabular}
\caption{Estimated ATT and standard errors under different lengths of pre-intervention period}
\label{table:empirical}
\end{table}

\begin{table}[!h]
\centering
\setlength{\tabcolsep}{12pt}
\begin{tabular}{ccccccc}
\toprule
\multicolumn{1}{l}{\#days} & SC      & SC-S    & PI      & PI-S    & PI-P    \\
\midrule
\multirow{2}{*}{80}                 & -0.0201 & -0.3704 & -0.5998 & -0.5931 & -0.7386 \\
                                   & (0.0140)  & (0.0431)  & (0.1540)  & (0.1539)  & (1.0273)  \\ 
\multirow{2}{*}{100}                & -0.0126 & -0.4949 & -0.7711 & -0.7690 & -0.3606 \\
                                   & (0.0119)  & (0.0515)  & (0.1629)  & (0.1606)  & (1.4765)  \\ 
\multirow{2}{*}{120}                & -0.0606 & -0.5658 & -0.9086 & -0.9201 & -0.5897 \\
                                   & (0.0155)  & (0.0521)  & (0.1652)  & (0.1650)  & (0.3332)  \\ 
\multirow{2}{*}{160}                & -0.0686 & -0.6259 & -1.0861 & -1.0863 & -0.5306 \\
                                   & (0.0160)  & (0.0502)  & (0.1691)  & (0.1656)  & (0.5783)  \\ 
\multirow{2}{*}{182}               & -0.0889 & -0.6426 & -1.1378 & -1.1338 & -1.2195 \\
                                   & (0.0210)  & (0.0475)  & (0.1609)  & (0.1578)   & (0.5211) 
\\\bottomrule
\end{tabular}
\caption{Estimated ATT and standard errors under different lengths of pre-intervention period}
\label{table:empirical}
\end{table}}

\begin{table}[!h]
\centering
\setlength{\tabcolsep}{12pt}
\begin{threeparttable}
\caption{Estimated ATT under different lengths of post-intervention period}
\begin{tabular}{ccccccc}
\toprule
\multicolumn{1}{l}{\#time steps} & $\text{SC}^{\dagger}$      & $\text{SC-S}^{\dagger}$    & $\text{PI}^{\dagger}$      & $\text{PI-S}^{*}$    & $\text{PI-P}^{\dagger}$    \\
\midrule
\multirow{2}{*}{80}                 & -0.020 & -0.370 & -0.600 & -0.593 & -0.739 \\
                                   & (120.843)  & (12.750)  & (0.999)  & (\textbf{0.154})  & (0.022)  \\ 
\multirow{2}{*}{100}                & -0.013 & -0.495 & -0.771 & -0.769 & -0.361 \\
                                   & (182.137)  & (9.725)  & (0.972)  & (\textbf{0.161})  & (0.012)  \\ 
\multirow{2}{*}{120}                & -0.061 & -0.566 & -0.909 & -0.920 & -0.590 \\
                                   & (113.319)  & (10.030)  & (0.998)  & (\textbf{0.165})  & (0.245)  \\ 
\multirow{2}{*}{160}                & -0.069 & -0.626 & -1.086 & -1.086 & -0.531 \\
                                   & (107.123)  & (10.882)  & (0.959)  & (\textbf{0.166})  & (0.082)  \\ 
\multirow{2}{*}{182}               & -0.089 & -0.643 & -1.138 & -1.134 & -1.220 \\
                                   & (56.464)  & (11.036)  & (0.962)  & (\textbf{0.158})   & (0.092) 
\\\bottomrule
\end{tabular}
\begin{tablenotes}
\scriptsize
\item[*] Numbers in parentheses represent the estimated standard errors.
\item[\dag] Numbers in parentheses represent the relative efficiencies, calculated as the ratio of variances of PI-S against the other methods.
\end{tablenotes}
\end{threeparttable}
\label{table:empirical}
\end{table}

Table \ref{table:empirical} displays the estimated average treatment effect on the treated (ATT) obtained using the five methods, along with their corresponding relative efficiency and standard errors enclosed in parentheses. In the \textbf{PI-S} column, the standard errors of the estimator are included, while in the other columns, we show the relative efficiency calculated as the ratio of variances of \textbf{PI-S} against the other methods. To estimate standard errors and relative efficiency, we use heteroskedasticity- and autocorrelation-consistent (HAC) variance estimators. Each row section corresponds to different lengths of post-intervention periods used for estimation, with the total length of the post-treatment period available in the data being 182 (triweekly frequency). The results suggest that the \textbf{SC} method performs poorly, as its point estimates differ considerably from those of the other methods. On the other hand, the \textbf{PI-S} method demonstrates smaller standard errors than \textbf{PI} when employing different post-intervention periods for training. This finding highlights the benefits of augmenting pre-intervention estimation with post-intervention data. Nonetheless, the difference between \textbf{PI} and \textbf{PI-S} is not substantial. Interestingly, despite relying solely on post-intervention data, \textbf{PI-P} is capable of producing reasonably accurate estimates. This observation underscores the potential advantages of using proximal inference methods, such as \textbf{PI}, \textbf{PI-P} and \textbf{PI-S}, in estimating the impact of financial crises on stock prices.

\section{Conclusion}\label{sec:conclusion}
Our paper introduces a novel framework that enhances the synthetic control method (SCM) for estimating the long-term impact of interventions. By integrating surrogates and leveraging the proximal causal inference framework, our approach effectively incorporates post-intervention information, improving the estimation of causal effects. Importantly, our method demonstrates the feasibility of estimating causal effects using only post-intervention period data, which is particularly valuable in situations with limited pre-intervention data. Moving forward, several avenues for future research arise from our work. One direction involves the development of prediction intervals for our proposed estimator, as seen in the work of \cite{Chernozhukov2021}. This would enable the quantification of uncertainty and provide a more comprehensive assessment of the estimated treatment effects. Additionally, exploring weighting approaches to achieve double robustness properties, as discussed in \cite{qiu2023doubly}, could further enhance the robustness of our estimates. Furthermore, in Section \ref{sec:contaminate}, we present a solution for surrogates contaminated by the outcomes of the control group. Extending this approach to address interference problems in the synthetic control setup holds promise for future research. Overall, our framework opens up new possibilities for refining and expanding the synthetic control method, offering valuable insights for researchers in the field of impact evaluation. 

\newpage
\bibliography{reference}

\newpage
\appendix
\section{Proof of Theorem \ref{thm:identification}}\label{sec:proof}
By Assumption \ref{as:error-term} and \ref{as:existence-of-weight}, we have
\begin{align*}
    E[Y_t(0)] = E[\lambda_t^\prime \beta] = E[\lambda_t^\prime \Gamma \alpha ] = E[(W_t - \epsilon_{W,t})^\prime \alpha] = E[W_t^\prime \alpha]
\end{align*}
and
\begin{equation*}
    E[Y_t(1)] = E[Y_t(0)] + E[\rho_t^\prime \theta] = E[W_t^\prime \alpha] + E[\rho_t^\prime \Phi\gamma] = E[W_t^\prime \alpha] + E[(X_t - \epsilon_{X,t})^\prime\gamma] = E[W_t^\prime \alpha + X_t^\prime \gamma]~.
\end{equation*}
Again, by Assumption \ref{as:error-term} and \ref{as:existence-of-weight}, we have
\begin{align*}
    E[Y_t(0) - W_t^\prime \alpha \mid \lambda_t, t\leq T_0] = E[\epsilon_{Y,t} - \epsilon_{W,t}^\prime \alpha \mid \lambda_t, t<T_0] = 0
\end{align*}
and
\begin{align*}
    E[Y_t(1) - W_t^\prime \alpha - X_t^\prime \gamma \mid \lambda_t, \rho_t, t > T_0] &= E[Y_t(1)-Y_t(0) - X_t^\prime \gamma \mid \lambda_t, \rho_t, t > T_0] \\
    &= E[\delta_t - \epsilon_{X,t}^\prime \gamma \mid \lambda_t, \rho_t, t > T_0] = 0~.
\end{align*}
By Assumption \ref{as:existence-of-proxy}, we have
\begin{align*}
    E[Y_t - W_t^\prime \alpha \mid Z_{0,t}, t\leq T_0] &= E[ E[Y_t(0) - W_t^\prime \alpha \mid \lambda_t, Z_{0,t}, t\leq T_0] \mid Z_{0,t}, t\leq T_0 ] \\
    &= E[ E[Y_t(0) - W_t^\prime \alpha \mid \lambda_t, t\leq T_0] \mid Z_{0,t}, t\leq T_0 ] = 0~,
\end{align*}
and
\begin{align*}
    E[Y_t - W_t^\prime \alpha - X_t^\prime \gamma \mid Z_{0,t}, Z_{1,t}, t > T_0]  &= E[E[Y_t(1) - W_t^\prime \alpha - X_t^\prime \gamma \mid \lambda_t, \rho_t, Z_{1,t}, t > T_0] \mid Z_{0,t}, Z_{1,t}, t > T_0] \\
    &=E[E[Y_t(1) - W_t^\prime \alpha - X_t^\prime \gamma \mid \lambda_t, \rho_t, t > T_0] \mid Z_{0,t}, Z_{1,t}, t > T_0] = 0~.
\end{align*}
To show uniqueness, assume there exists two sets of vectors of weights $(\alpha, \gamma), (\Tilde{\alpha},\Tilde{\gamma})$ such that $E[Y_t - W_t^\prime \alpha - X_t^\prime \gamma \mid Z_{0,t}, Z_{1,t}, t > T_0] = E[Y_t - W_t^\prime \Tilde{\alpha} - X_t^\prime \Tilde{\gamma} \mid Z_{0,t}, Z_{1,t}, t > T_0] = 0$. Thus, $E[W_t^\prime (\alpha - \Tilde{\alpha}) + X_t^\prime (\gamma - \Tilde{\gamma}) \mid Z_{0,t},  Z_{1,t}, t > T_0]=0$, which by Assumption \ref{as:completeness} implies $\alpha = \Tilde{\alpha}$ and $\gamma = \Tilde{\gamma}$. 
\qed

\section{Proof of Theorem \ref{thm:gmm-coefficients} and \ref{thm:gmm-ate}}\label{sec:proof2}
We first show that $E[U_t(\alpha, \gamma)] = E[\Tilde{U}_t(\alpha, \gamma, \tau)] = 0$ at the true value under Assumptions \ref{as:error-term}-\ref{as:completeness} and \ref{as:stationary}. For brevity, we only show $E[U_t(\alpha, \gamma)] = 0$. We have
\begin{align*}
    E[U_t(\alpha, \gamma)] &= E\left[\begin{pmatrix}
    g_0(Z_{0,t}) (Y_t - W_t^\prime \alpha) I\{t \leq T_0\} \\
    g_1(Z_{0,t}, Z_{1,t}) (Y_t - W_t^\prime \alpha - X_t^\prime \gamma) I\{t > T_0\}
    \end{pmatrix}\right] \\
    &= E\left[\begin{pmatrix}
    E\left[g_0(Z_{0,t}) (Y_t - W_t^\prime \alpha) I\{t \leq T_0\}\mid Z_{0,t}, t \leq T_0\right] \\
    E\left[g_1(Z_{0,t}, Z_{1,t}) (Y_t - W_t^\prime \alpha - X_t^\prime \gamma) I\{t > T_0\} \mid Z_{0,t}, Z_{1,t}, t > T_0\right]
    \end{pmatrix}\right] \\
    &= E\left[\begin{pmatrix}
    g_0(Z_{0,t})E\left[ (Y_t - W_t^\prime \alpha) I\{t \leq T_0\}\mid Z_{0,t}, t \leq T_0\right] \\
    g_1(Z_{0,t}, Z_{1,t}) E\left[(Y_t - W_t^\prime \alpha - X_t^\prime \gamma) I\{t > T_0\} \mid Z_{0,t}, Z_{1,t}, t > T_0\right]
    \end{pmatrix}\right] \quad \text{(by Theorem \ref{thm:identification})}\\
    &= 0~.
\end{align*}
Now we focus on proving Theorem \ref{thm:gmm-coefficients}. Similar arguments apply to Theorem \ref{thm:gmm-ate}. We follow Chapter 3 of \cite{hall2005generalized} to prove the asymptotic normality of our estimator. To summarize, we have a population moment condition $E[U_t(\theta)] = 0$ with a moment function $U_t(\alpha, \gamma) = \mathcal{V}_t\left(Y_t-\mathcal{D}_t^{\prime} \theta\right) $, $\theta = (\alpha, \gamma) \in \mathcal{R}^{N+H}$, $\mathcal{V}_t = \{g_0(Z_{0,t})^\prime I\{t \leq T_0\}, g_1(Z_{0,t}, Z_{1,t})^\prime I\{t > T_0\}\}^\prime \in \mathcal{R}^{d_1 + d_2}$, $\mathcal{D}_t = \{W_t^\prime, X_t^\prime I\{t > T_0\} \}^\prime \in \mathcal{R}^{N+H}$. Let $\mathcal{O}_t = \{Y_t, W_t, X_t, T_0, Z_{0,t}, Z_{1,t}\}\in \mathcal{R}^{N+H+\text{dim}(\{Z_{0,t},Z_{1,t}\})+2}$, denote the observable vector of random variables, where $\text{dim}(\{Z_{0,t},Z_{1,t}\})$ is the dimension of $\{Z_{0,t},Z_{1,t}\}$. Let $\Theta \subseteq \mathcal{R}^{N+H}$ denote the parameter space of $\theta$, and let $O \subseteq \mathcal{R}^{N+H+\text{dim}(\{Z_{0,t},Z_{1,t}\})+2}$ denote the sample space of $\mathcal{O}_t$. Then $U_t = U_t(\mathcal{O}_t; \theta)$ is a mapping from $O\times \Theta$ to $\mathcal{R}^{d_1 + d_2}$. We impose the following regularity conditions.

\begin{assumption}[Strict stationarity]\label{as:app-stationary}
    The observable vector of random variables $\mathcal O_t$ form a strictly stationary process, such that all expectations of functions of $\mathcal O_t$ do not depend on time.
\end{assumption}

\begin{assumption}[Regularity conditions for $U_t$]\label{as:app-ut}
    The function $U_t: O\times \Theta \rightarrow \mathcal{R}^{d_1 + d_2}$ where $d_1 + d_2 < \infty$ satisfies: (i) it is continuous on $\Theta$ for each $\mathcal{O}_t \in O$; (ii) $E[U_t(\mathcal{O}_t;\theta)]$ exists and is finite for every $\theta \in \Theta$; (iii) $E[U_t(\mathcal{O}_t;\theta)]$ is continuous on $\Theta$.
\end{assumption}

\begin{assumption}[Regularity conditions on $\partial U_t(\mathcal{O}_t;\theta)/\partial \theta^\prime$]\label{as:app-regularity-Ut}
    (i) The derivative matrix $\partial U_t(\mathcal{O}_t;\theta)/\partial \theta^\prime$ exists and is continuous on $\Theta$ for each $\mathcal{O}_t \in O$; (ii) The true value of $\theta$ does not lie on the boundary of $\Theta$; (iii) $E[\partial U_t(\mathcal{O}_t;\theta)/\partial \theta^\prime]$ exists and is finite.
\end{assumption}

\begin{assumption}[Properties of the Weighting Matrix]\label{as:app-omega}
    The user-specified weight matrix $\Omega$ is a positive semi-definite matrix, possibly depends on data, and converges in probability to the positive definite matrix of constants.
\end{assumption}

\begin{assumption}[Ergodicity]\label{as:app-ergodicity}
    The random process $\{\mathcal{O}_t; -\infty < t < \infty\}$ is ergodic.
\end{assumption}

\begin{assumption}[Compactness of $\Theta$]\label{as:app-compact-Theta}
    $\Theta$ is a compact set.
\end{assumption}

\begin{assumption}[Domination of $U_t(\mathcal{O}_t;\theta)$]\label{as:app-domination-U}
    $E[\text{sup}_{\theta\in \Theta} \|U_t(\mathcal{O}_t;\theta)\|] < \infty$.
\end{assumption}

\begin{assumption}[Properties of the variance of the sample moment]\label{as:app-variance}
    Let $\theta^* = (\alpha^*, \gamma^*)$ denote the true value of $\theta$. (i) $E[U_t(\mathcal{O}_t;\theta) U_t(\mathcal{O}_t;\theta)^\prime]$ exists and is finite; (ii) $S$ exists and is a finite valued positive definite matrix.
\end{assumption}

\begin{assumption}[Properties of $G_T(\theta) = T^{-1} \sum_{t=1}^T \partial U_t(\mathcal{O}_t;\theta)/\partial \theta^\prime$]\label{as:app-Gt}
    (i) $E[\partial U_t(\mathcal{O}_t;\theta)/\partial \theta^\prime]$ is continuous on some neighbourhood $N_\epsilon$ of the true value $\theta^*$ in $\Theta$; (ii) Uniform convergence of $G_T(\theta)$: $\sum_{\theta \in N_\epsilon} \|G_T(\theta) -  E[\partial U_t(\mathcal{O}_t;\theta)/\partial \theta^\prime]\| \xrightarrow{p} 0$.
\end{assumption}

Under Assumptions \ref{as:error-term}-\ref{as:completeness} and \ref{as:stationary}  and \ref{as:app-stationary}-\ref{as:app-Gt}, we have that Theorem \ref{thm:gmm-coefficients} holds by Theorem 3.2 of \cite{hall2005generalized}.

\section{Additional Tables}
\begin{table}[ht!]
\centering
\setlength{\tabcolsep}{5.5pt}
\begin{adjustbox}{max width=0.95\linewidth,center}
\begin{tabular}{lllllllllllll}
\toprule
       &              & \multicolumn{5}{c}{$\lambda_t \sim N(1,1)$}                      &  & \multicolumn{5}{c}{$\lambda_t \sim N(\log(t),1)$}                      \\  \cmidrule{3-7} \cmidrule{9-13}
Error term & ($K,T$)  &  \textbf{SC}   & \textbf{SC-S}   & \textbf{PI} & \textbf{PI-P}  & \textbf{PI-S}   & & \textbf{SC}   & \textbf{SC-S}   & \textbf{PI} & \textbf{PI-P}  & \textbf{PI-S}   \\
\toprule
\multirow{6}{*}{Covariates} & $(1,400)$  & 0.031 & 0.031 & 0.033 & 0.036  & 0.025 &  & 0.662 & 0.662 & 0.044 & 0.055 & 0.026 \\
                        & $(1,800)$  & 0.015 & 0.015 & 0.016 & 0.016  & 0.012 &  & 0.639 & 0.639 & 0.021 & 0.026 & 0.013 \\
                        & $(1,1200)$  & 0.010 & 0.010 & 0.011 & 0.011  & 0.008 &  & 0.635 & 0.634 & 0.014 & 0.019 & 0.009 \\
                        & $(5,400)$  & 0.081 & 0.081 & 0.112 & 0.208  & 0.089 &  & 6.674 & 6.611 & 0.144 & 0.264 & 0.108 \\
                        & $(5,800)$ & 0.037 & 0.037 & 0.051 & 0.083  & 0.042 &  & 6.463 & 6.430 & 0.066 & 0.109 & 0.049 \\
                        & $(5,1200)$ & 0.026 & 0.026 & 0.036 & 0.057  & 0.030 &  & 6.394 & 6.365 & 0.043 & 0.065 & 0.032 \\
\multicolumn{1}{l}{}    &            & \multicolumn{1}{c}{} & \multicolumn{1}{c}{} & \multicolumn{1}{c}{} & \multicolumn{1}{c}{} & \multicolumn{1}{c}{} & \multicolumn{1}{c}{} & \multicolumn{1}{c}{} & \multicolumn{1}{c}{} & \multicolumn{1}{c}{} & \multicolumn{1}{c}{} & \multicolumn{1}{c}{} \\
\multirow{6}{*}{No Covariates}    & $(1,400)$  & 0.043 & 0.043 & 0.060 & 0.080  & 0.049 &  & 0.677 & 0.677 & 0.070 & 0.144 & 0.050 \\
                        & $(1,800)$  & 0.020 & 0.020 & 0.027 & 0.038  & 0.024 &  & 0.647 & 0.647 & 0.036 & 0.062 & 0.026 \\
                        & $(1,1200)$  & 0.013 & 0.013 & 0.017 & 0.025  & 0.016 &  & 0.647 & 0.647 & 0.024 & 0.042 & 0.018 \\
                        & $(5,400)$  & 0.093 & 0.092 & 0.214 & 15.755 & 0.226 &  & 6.650 & 6.588 & 0.248 & 1.582 & 0.238 \\
                        & $(5,800)$ & 0.044 & 0.044 & 0.090 & 0.228  & 0.089 &  & 6.441 & 6.411 & 0.116 & 0.315 & 0.102 \\
                        & $(5,1200)$ & 0.029 & 0.029 & 0.058 & 0.150  & 0.061 &  & 6.432 & 6.406 & 0.072 & 0.173 & 0.064 
 \\
\bottomrule
\end{tabular}
\end{adjustbox}
\caption{Mean Squared Errors under various methods}
\label{table:mse_cov}
\end{table}

\end{document}